\documentclass[submission,copyright,creativecommons]{eptcs}


\clubpenalty = 10000
\widowpenalty = 10000

\usepackage[pdftex]{graphicx}
\usepackage[utf8]{inputenc}

\usepackage[english]{babel}

\usepackage[cmex10]{amsmath}
\usepackage{amssymb}

\interdisplaylinepenalty=2500

\usepackage{amsthm}


%
\usepackage{url}

\newcommand{\hairspace}{\hspace{1pt}}
\newcommand{\eg}{\mbox{e.\hairspace{}g.} }  
\newcommand{\ie}{\mbox{i.\hairspace{}e.} }  

\newcommand{\etal}{\mbox{et~al.}\ }

	\usepackage{IEEEtrantools}

\usepackage{caption}
\usepackage{subcaption}

\usepackage[dvipsnames]{xcolor}

\usepackage{endnotescorny}




\usepackage{booktabs}
\usepackage{array}
\usepackage{multirow}


\usepackage{pifont}
%
%

\usepackage{expdlist}

\hypersetup{
    colorlinks=true,
    linktocpage=true,
    breaklinks=true,
    bookmarksnumbered,
    bookmarksopen=true,
    bookmarksopenlevel=1,
    pdfhighlight=/O,
    pdfpagemode=UseOutlines,
    urlcolor=black,
    linkcolor=black,
    citecolor=black,
    pdfauthor={Cornelius Diekmann}
}

\newcommand{\setoffending}{\mathit{offending\_flows}}
\newcommand{\getoffending}{\mathit{get\_offending\_flows}}
\newcommand{\getIFS}{\mathit{getIFS}}
\newcommand{\getACS}{\mathit{getACS}}
\newcommand{\Tau}{\mathit{T}}
\newcommand{\BigO}{\mathrm{O}}

\title{Directed Security Policies:\\A Stateful Network Implementation}
\date{\today}
\author{Cornelius Diekmann\textsuperscript{\textdagger} \qquad\qquad Lars Hupel\textsuperscript{\textdaggerdbl} \qquad\qquad Georg Carle\textsuperscript{\textdagger}
\institute{Technische Universit\"{a}t M\"{u}nchen}
\institute{\textsuperscript{\textdagger}Chair for Network Architectures and Services \qquad\qquad \textsuperscript{\textdaggerdbl}Chair for Logic and Verification
}
\email{diekmann@net.in.tum.de \qquad\qquad lars.hupel@tum.de \qquad\qquad carle@in.tum.de}}

\newtheoremstyle{thm}{3pt}{3pt}{\itshape}{}{\bfseries}{}{.5em}{}
\newtheoremstyle{normal}{3pt}{3pt}{}{}{\bfseries}{}{.5em}{}

\theoremstyle{normal}

\newcounter{definitioncounter}
\newtheorem{definition}[definitioncounter]{Definition}

\newtheorem*{example}{Example}

\theoremstyle{thm}

\newcounter{lemmacounter}
\newtheorem{lemma}[lemmacounter]{Lemma}

\newcounter{corollarycounter}
\newtheorem{corollary}[corollarycounter]{Corollary}

\newcounter{theoremcounter}
\newtheorem{theorem}[theoremcounter]{Theorem}

\makeatletter
\g@addto@macro\th@thm{\thm@headpunct{}}
\g@addto@macro\th@normal{\thm@headpunct{}}
\makeatother

\begin{document}

\maketitle

\begin{abstract}
Large systems are commonly internetworked. 
A security policy describes the communication relationship between the networked entities. 
The security policy defines rules, for example that $A$ can connect to $B$, which results in a directed graph. 
However, this policy is often implemented in the network, for example by firewalls, such that $A$ can establish a connection to $B$ and all packets belonging to established connections are allowed. 
This stateful implementation is usually required for the network's functionality, but it introduces the backflow from $B$ to $A$, which might contradict the security policy. 
We derive compliance criteria for a policy and its stateful implementation. 
In particular, we provide a criterion to verify the lack of side effects in linear time. 
Algorithms to automatically construct a stateful implementation of security policy rules are presented, which narrows the gap between formalization and real-world implementation. 
The solution scales to large networks, which is confirmed by a large real-world case study. 
Its correctness is guaranteed by the Isabelle/HOL theorem prover. 
\end{abstract}

\section{Introduction}
Large systems with high requirements for security and reliability, such as SCADA or enterprise landscapes, no longer exist in isolation but are internetworked~\cite{hansen2012research}. 
Uncontrolled information leakage and access control violations may cause severe financial loss -- as demonstrated by Stuxnet -- and may even harm people if critical infrastructure is attacked. 
Hence, network security is crucial for system security. 

A central task of a network security policy is to define the network's desired connectivity structure and hence decrease its attack surface against access control breaches and information leakage.
A security policy defines, among others things, rules determining which host is allowed to communicate with which other hosts.
One of the most prominent security mechanisms to enforce a policy are network firewalls.
For adequate protection by a firewall, its rule set is critical~\cite{bishop2003computer,bartal1999firmato}.
For example, let $A$ and $B$ be sets of networked hosts identified by their IP addresses.
Let $A \rightarrow B$ denote a policy rule describing that $A$ is allowed to communicate with $B$.
Several solutions from the fields of formal testing~\cite{brucker2008modelfwisabelle} to formal verification~\cite{fireman2006} can guarantee that a firewall actually implements the policy $A \rightarrow B$.
However, to the best of our knowledge, one subtlety between firewall rules and policy rules remains unsolved: 
For different scenarios, there are diverging means with different protection for translating $A \rightarrow B$ to firewall rules. 

\paragraph{Scenario 1}
Let $A$ be a workstation in some local network and $B$ represent the hosts in the Internet.
The policy rule $A \rightarrow B$ can be justified as follows:
The workstation can access the Internet, but the hosts in the Internet cannot access the workstation, \ie the workstation is protected from attacks from the Internet.
This policy can be translated to \eg the Linux iptables firewall~\cite{iptables} as illustrated in Figure~\ref{tab:intro:statefuliptables}.
The first rule allows $A$ to establish a new connection to $B$.
The second rule allows any communication over established connections in both directions, a very common practice.
For example, $A$ can request a website and the answer is transmitted back to $A$ over the established connection.
Finally, the last rule drops all other packets.
In particular, no one can establish a connection to $A$; hence $A$ is protected from malicious accesses from the Internet.

\begin{figure*}[t]
\centering
\noindent\begin{minipage}{.7\linewidth}
\footnotesize
\texttt{iptables -A INPUT -s $A$ -d $B$ -m conntrack --ctstate NEW -j ACCEPT}\\
\texttt{iptables -A INPUT -m conntrack --ctstate ESTABLISHED -j ACCEPT}\\ 
\texttt{iptables -A INPUT -j DROP}
\end{minipage}%
  \vskip-7pt%
  \caption{Stateful implementation of $A \rightarrow B$ in Scenario~1}%
  \label{tab:intro:statefuliptables}%
\end{figure*}

\begin{figure*}[t]
\centering
\noindent\begin{minipage}{.7\linewidth}
\footnotesize
\texttt{iptables -A INPUT -s $A$ -d $B$ -j ACCEPT}\\
\texttt{iptables -A INPUT -j DROP}
\end{minipage}%
  \vskip-7pt%
  \caption{Stateless implementation of $A \rightarrow B$ in Scenario~2}%
  \label{tab:intro:statelessiptables}%
\end{figure*}

\paragraph{Scenario 2}
In a different scenario, the same policy rule $A \rightarrow B$ has to be translated to a completely different set of firewall rules.
Assume that $A$ is a smart meter recording electrical energy consumption data, which is in turn sent to the provider's billing gateway $B$.
There, smart meter records of many customers are collected.
That data must not flow back to any customer, as this could be a violation of other customers' privacy.
For example, under the assumption that $B$ sends packets back to $A$, a malicious customer could try to infer the energy consumption records of their neighbors with a timing attack.
In Germany, the requirement for unidirectional communication of smart meters is even standardized by a federal government agency~\cite{bsi2013smartmeter}.
The corresponding firewall rules for this scenario can be written down as shown in Figure~\ref{tab:intro:statelessiptables}.
The first rule allows packets from $A$ to $B$, whereas the second rule discards all other packets.
No connection state is established; hence no packets can be sent from $B$ to $A$.

\smallskip

These two firewall rule sets were created from the same security policy rule \mbox{$A \rightarrow B$}.
The first implementation of ``$\rightarrow$'' is ``can initiate connections to'', whereas the second implementation is ``can send packets to''.
The second implementation appears to be simpler and more secure, and the firewall rules are justifiable more easily by the policy.
However, this firewall configuration is undesirable in many scenarios as it might affect the desired functionality of the network.
For example, surfing the web is not possible as no responses (\ie websites) can be transferred back to the requesting host.

A decision must be made whether to implement a policy rule $A \rightarrow B$ in the stateful (Figure~\ref{tab:intro:statefuliptables}) or in the stateless fashion (Figure~\ref{tab:intro:statelessiptables}).
The stateful fashion bears the risk of undesired side effects by allowing packet flows that are opposite to the security policy rule.
In particular, this could introduce information leakage.
On the other hand, the stateless fashion might impair the network's functionality. 
Hence, stateful flows are preferable for network operation, but are undesirable with regard to security. 
In this paper, we tackle this problem by maximizing the number of policy rules that can be made stateful without introducing security issues.

We can see that even if a well-specified security policy exists, its implementation by a firewall configuration remains a manual and hence error-prone task.
A 2012 survey~\cite{sherry2012making} of 57 enterprise network administrators confirms that a ``majority of administrators stated misconfiguration as the most common cause of failure'' \cite{sherry2012making}.
A study~\cite{databreach2009src} conducted by Verizon from 2004 to 2009 and the United States Secret Service during 2008 and 2009 reveals that data leaks are often caused by configuration errors~\cite{databreach2009}.

In this paper, we answer the following questions:
\begin{itemize}
	\item What conditions can be checked to verify that a stateful policy implementation complies with the directed network security policy rules?
	\item When can a policy rule $A \rightarrow B$ be upgraded to allow a stateful connection between $A$ and $B$?
\end{itemize}

\noindent
Our results apply not only to firewalls but to any network security mechanisms that shape network connectivity.

The outline of this paper is as follows. 
Section~\ref{sec:example} presents a guiding example. 
Section~\ref{sec:model} formalizes the key concepts of directed policies, security requirements, and stateful policies. 
Section~\ref{sec:requiremensstateful} discusses the requirements for a stateful policy to comply with a directed policy. 
Section~\ref{sec:algorithm} presents an algorithm to automatically derive a stateful policy. 
Sections~\ref{sec:computationalcomplexity} and \ref{sec:casestudy} evaluate our work: 
Section~\ref{sec:computationalcomplexity} discusses the computational complexity of the algorithm, and Section~\ref{sec:casestudy} presents a large real-world case study.

\section{Example}
\label{sec:example}

\begin{figure*}[h]
  \centering
  \hspace*{\fill}%
  \begin{subfigure}[t]{0.48\textwidth}
       \centering
       \captionsetup{width=0.80\textwidth}
       \includegraphics[width=1.0\textwidth]{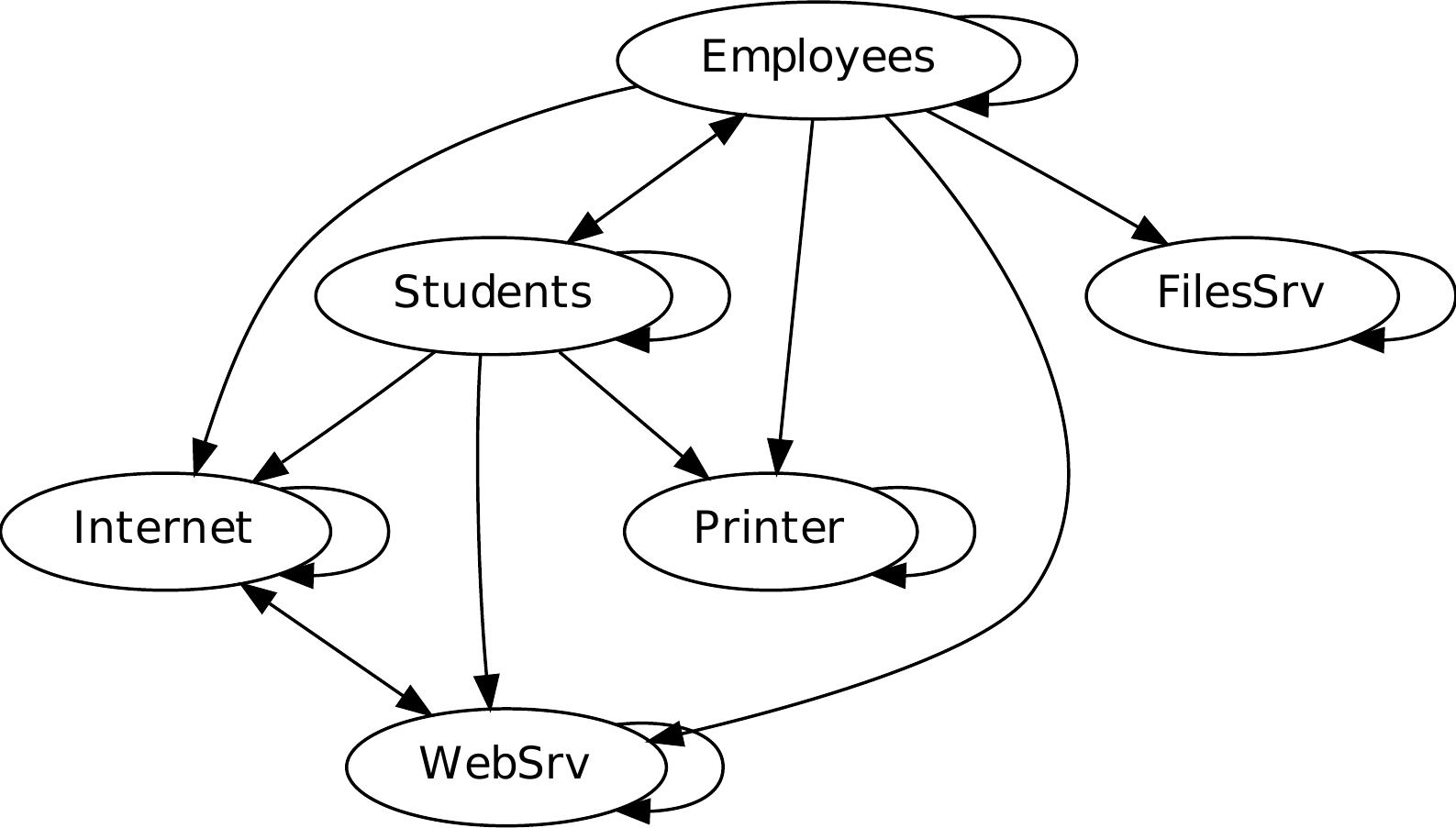}
  \caption{Network security policy}
  \label{fig:intro:policygraph}
  \end{subfigure}%
  \hspace*{\fill}%
  \begin{subfigure}[t]{0.48\textwidth}
       \centering
       \captionsetup{width=0.80\textwidth}
       \includegraphics[width=1.0\textwidth]{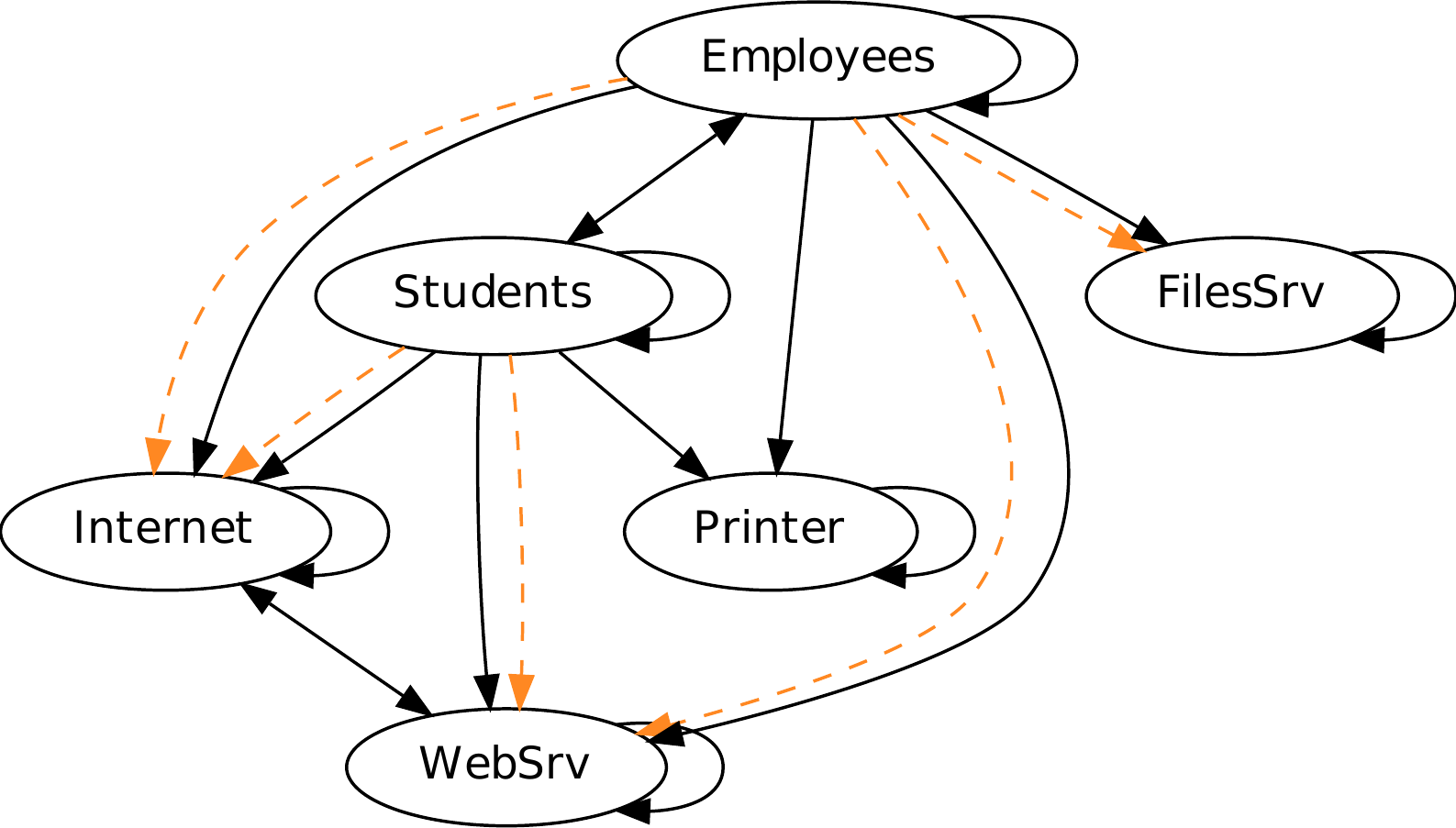}
  \caption{Stateful implementation}
  \label{fig:intro:statefulpolicygraph}
  \end{subfigure}%
  \hspace*{\fill}%
  \caption{The network security policy and its stateful implementation}
  \label{fig:intro}
\end{figure*}

\noindent
We introduce a network -- for a hypothetical university department -- to illustrate the problem with a complete example and outline the solution before we describe its formalization in the next section.

The network (depicted in Figure \ref{fig:intro}) consists of the following participants: the students, the employees, a printer, a file server, a web server, and the Internet.
The network security policy rules are depicted in Figure~\ref{fig:intro:policygraph} as a directed graph.
A security policy rule $A \rightarrow B$ is denoted by an edge from $A$ to $B$.
The security policy is designed to fulfill the following security invariants.

\begin{description}[\compact\setleftmargin{\parindent}] 
  \item[Access Control Invariants]
    The printer is only accessible by the employees and students; more formally, $\mathit{employees} \rightarrow \mathit{printer}$ and $\mathit{students} \rightarrow \mathit{printer}$. 
    The file server is only accessible by employees, formally \mbox{$\mathit{employees} \rightarrow \mathit{fileSrv}$}.
    The students and the employees are in a joint subnet that allows collaboration between them but protects against accesses from \eg the Internet or a compromised web or file server. 

  \item[Information Flow Invariants]
    The file server stores confidential data that must not leak to untrusted parties.
    Only the employees have the necessary security clearance to receive data from the file server.
    The employees are also trustworthy, \ie they may declassify and reveal any data received by the file server.
    The printer is an information sink.
    Confidential data (such as an exam) might be printed by an employee.
    No other network participants, in particular no students, are allowed to retrieve any information from the printer that might allow them to draw conclusions about the printed documents. 
    This can be formalized by ``$\mathit{*} \rightarrow \mathit{printer}$'' and ``$\mathit{printer} \nrightarrow \mathit{*}$''. 
\end{description}

\subsection*{Stateful Policy Implementation}
Considering Figure~\ref{fig:intro:policygraph}, it is desirable to allow stateful connections from the employees and students to the Internet and the web server.
Figure~\ref{fig:intro:statefulpolicygraph} depicts the stateful policy implementation, where the additional dashed edges represent flows that are allowed to be stateful, \ie answers in the opposite direction are allowed.
Only strict stateless unidirectional communication with the printer is necessary.
The students and employees can, as already defined by the policy, freely interact with each other. 
Hence stateful semantics are not necessary for these flows.

In this paper, we specify conditions to verify that the stateful policy implementation (\eg Figure~\ref{fig:intro:statefulpolicygraph}) complies with the directed security policy (\eg Figure~\ref{fig:intro:policygraph}).
We present an efficiently computable condition and formally prove that it implies several complex compliance conditions.
Finally, we present an algorithm that automatically computes a stateful policy from the directed policy and the security invariants.
We formally prove the algorithm's correctness and that it can always compute a maximal possible set of stateful flows with regard to access control and information flow security strategies.

\section{Formal Model}
\label{sec:model}
We implement our theory and formal proofs in the Isabelle/HOL theorem prover~\cite{isabelle2013}. 
It is based on a small inference kernel. 
All proof steps, done by either the user or by the (embedded or external) automated proof tactics and solvers, must pass this kernel. 
The correctness of Isabelle/HOL proofs therefore only depends on the correctness of the kernel. 
This architecture makes the system highly trustworthy, because the proof kernel consists only of little code, is widely used (and has been for over a decade) and is rigorously manually checked. 
In this paper, all proofs are verified by Isabelle/HOL. 
The corresponding theory files are publicly available (c.\hairspace{}f.\ Section~\nameref{sec:availability}).

The following notations are used in this paper. 
A total function from $\mathcal{A}$ to $\mathcal{B}$ is denoted by \mbox{$\mathcal{A} \Rightarrow \mathcal{B}$}. 
A logical implication is written with a long arrow ``$\Longrightarrow$''. 
Function application is written without parentheses: $f \ x \ y$ means ``$f$ applied to $x$ and $y$''. 
The set of Boolean values is denoted by the symbol $\mathbb B$. 

For readability, we only present the intuition behind proofs or even omit the proof completely. 
Whenever we omit a proof, we add an endnote that points to our formalization. 
We also add endnotes into the text which can be used to jump directly from a definition in this paper to the definition in the theory files. 
The endnotes are referenced by roman marks. 
For example, if the paper states ``note\endnotemark[4] that $A$ is equal to $B$'', then the corresponding formal, machine-verified proof can be found by following \endnotemark[4]. 

\paragraph*{Network Security Policy Rules}
We represent the network security policy's access rules as directed graph $G = (V,\, E)$. 
The type of all graphs is denoted by $\mathcal{G}$. 
For example, the policy that only consists of the rule that $A$ can send to $B$, denoted by $A \rightarrow B$, is represented by the graph $G = (\lbrace A, B \rbrace, \ \lbrace (A, B) \rbrace)$. 
An edge in the graph corresponds to a permitted flow in the network. 
We call this policy a \emph{directed policy}. 
In \S~\ref{subsec:statefulpolicy}, we will introduce the notion of a stateful policy. 

We consider only syntactically valid graphs. 
A graph is \emph{syntactically valid}\endnote{FiniteGraph.valid-graph} if all nodes in the edges are also listed in the set of vertices. 
In addition, since we represent finite networks, we require that $V$ is a finite set. 
This does not prevent creating nodes that represent collections of arbitrary many hosts, \eg the node $\mathit{Internet}$ in Figure~\ref{fig:intro:policygraph} represents arbitrarily many hosts.

\paragraph*{Network Security Invariants}
A security invariant $m$ specifies whether a given policy $G$ fulfills its security requirements. 
As we focus on the network security policy's access rules which specify which hosts are allowed to communicate with which other hosts, we do not take availability or resilience requirements into account. 
Instead, we deal with only the traditional security invariants that follow the principle ``prohibiting more is more or equally secure''. 
We call this principle \emph{monotonicity}. 
To allow arbitrary network security invariants, almost any total function $m$ of type $\mathcal{G} \Rightarrow \mathbb{B}$ can be used to specify a network security requirement. 

This model landscape is based on the formal model by Diekmann \cite{diekmann2014forte}. 
We distinguish between the two security strategies that $m$ is set to fulfill: 
\emph{Information flow security strategies} (IFS) prevent data leakage; 
\emph{Access control strategies} (ACS) are used to prevent illegal or unauthorized accesses. 

\begin{definition}[Security Invariant]
\label{def:securityinvariant}
A network security invariant $m$ is a total function \mbox{$\mathcal{G}  \Rightarrow\: \mathbb{B}$} with a security strategy (either IFS or ACS) satisfying the following conditions: 
\begin{itemize}
		\item {If no communication exists in the network, the security invariant must be fulfilled: $m \ (V,\ \emptyset) $}
		
		\item {Monotonicity: $m \ (V,E) \; \wedge \; E' \subseteq E \Longrightarrow m \ (V,E')$} 
	\end{itemize}
\end{definition}

\noindent
If there is a security violation for $m$ in $G$, there must be at least one set $F \subseteq E$ such that the security violation can be remedied by removing $F$ from $E$.\footnote{Since $m \ (V,\ \emptyset) $, it is obvious that such a set always exists.} 
We call $F$ offending flows. 
$F$ is \emph{minimal} if all flows $(s, r) \in F$ contribute to the security violation. 
For $m$, the set of all minimal offending flows can be defined. 
The definition $\setoffending$ describes a set of sets, containing all minimal candidates for $F$. 
\begin{IEEEeqnarray*}{l} \setoffending \ m \ G = 
\bigl\lbrace F \subseteq E \ \vert \ \neg\, m \ G \ \wedge \ m \ (V,\ E \setminus F)\ \ \wedge \ 
\forall (s,r) \in F.\ \neg\, m \ (V,\, (E \setminus F) \cup \lbrace (s,r) \rbrace)  \bigr\rbrace
\end{IEEEeqnarray*}

\noindent
The offending flows inherit $m$'s monotonicity property. 
The full proof can be found in our formalization.\endnote{offending-flows-union-mono} 
\begin{lemma}[Monotonicity of Offending Flows]
\label{lemma:mono-union-offending-flows}
\begin{IEEEeqnarray*}{l}
E' \subseteq E \Longrightarrow \bigcup \setoffending \ m \ (V, E') \subseteq \bigcup \setoffending \ m \ (V, E) 
\end{IEEEeqnarray*}
\end{lemma}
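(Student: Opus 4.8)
The plan is to prove the inclusion elementwise. Every member of $\bigcup \setoffending\ m\ (V, E')$ is a flow $(s,r)$ that occurs in some minimal offending set $F' \in \setoffending\ m\ (V, E')$, so it suffices to produce, for each such $(s,r)$, a minimal offending set $F \in \setoffending\ m\ (V, E)$ with $(s,r) \in F$. Fixing $F'$ and $(s,r) \in F'$, I would first read off from the definition of $\setoffending$ the two facts that matter: $m\ (V,\, E' \setminus F')$ holds, and --- since $(s,r) \in F'$ is one of the flows witnessing minimality --- $\neg\, m\ (V,\, (E' \setminus F') \cup \lbrace (s,r) \rbrace)$. Abbreviating $S_0 := E' \setminus F'$, note $S_0 \subseteq E' \subseteq E$ and $(s,r) \in F' \subseteq E$, hence $S_0 \cup \lbrace (s,r) \rbrace \subseteq E$; feeding this and $\neg\, m\ (V,\, S_0 \cup \lbrace (s,r) \rbrace)$ into the contrapositive of the monotonicity clause of Definition~\ref{def:securityinvariant} yields $\neg\, m\ (V, E)$, so $(V,E)$ is genuinely violated and an offending set for it exists.

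Next I would split $E$ into the three pairwise-disjoint parts $S_0$, $\lbrace (s,r) \rbrace$, and $D := E \setminus (S_0 \cup \lbrace (s,r) \rbrace)$, and look for $F$ of the shape $\lbrace (s,r) \rbrace \cup D_1$ with $D_1 \subseteq D$: fixing $(s,r)$ and trimming only the ``extra'' edges $D$ is exactly what keeps $(s,r)$ from being discarded. Because $V$ is finite and the graph is syntactically valid, $E$ and hence $D$ are finite, so the collection $\lbrace T \subseteq D \ \vert \ m\ (V,\, S_0 \cup (D \setminus T)) \rbrace$ --- which is nonempty, as $T = D$ works since $m\ (V, S_0)$ holds --- has a $\subseteq$-minimal element $D_1$. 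I then set $F := \lbrace (s,r) \rbrace \cup D_1$, so that $F \subseteq E$ and $E \setminus F = S_0 \cup (D \setminus D_1)$.

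It remains to verify $F \in \setoffending\ m\ (V, E)$ and $(s,r) \in F$. Membership of $(s,r)$ is immediate, $\neg\, m\ (V, E)$ was already shown, and $m\ (V,\, E \setminus F)$ is exactly the defining property of $D_1$. For the minimality clause, let $(a,b) \in F$: if $(a,b) \in D_1$, minimality of $D_1$ gives $\neg\, m\ (V,\, S_0 \cup (D \setminus D_1) \cup \lbrace (a,b) \rbrace)$, which is $\neg\, m\ (V,\, (E \setminus F) \cup \lbrace (a,b) \rbrace)$; if $(a,b) = (s,r)$, then $(E \setminus F) \cup \lbrace (s,r) \rbrace \supseteq S_0 \cup \lbrace (s,r) \rbrace$, so $\neg\, m$ of it follows once more from $\neg\, m\ (V,\, S_0 \cup \lbrace (s,r) \rbrace)$ by monotonicity. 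Hence $F$ is a minimal offending set of $(V,E)$ containing $(s,r)$, and the inclusion follows.

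The routine part is the monotonicity bookkeeping and the reduction to single flows; the step that needs care is upgrading an offending set of the smaller graph to a truly minimal one of the larger graph without losing the chosen flow. The device that makes it go through is to pin $(s,r)$ down, minimize only over the edges $D$ that are new relative to $S_0 \cup \lbrace (s,r) \rbrace$, and observe that minimality at $(s,r)$ itself is automatic: re-adding $(s,r)$ only enlarges the set $S_0 \cup \lbrace (s,r) \rbrace$ on which $m$ already fails. This is also the point where finiteness of $E$ is essential in the mechanized proof.
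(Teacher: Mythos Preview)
Your argument is correct. The paper itself does not present a proof of Lemma~\ref{lemma:mono-union-offending-flows} in the text; it only states that ``the full proof can be found in our formalization'' and points to the Isabelle lemma \texttt{offending-flows-union-mono}. So there is no in-paper proof to compare your approach against, and what you have written is a complete, self-contained proof suitable for a paper presentation.

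Your strategy---pin down a single flow $(s,r)$ from a minimal offending set $F'$ of the smaller graph, use monotonicity of $m$ to lift the violation to $(V,E)$, and then carve out a minimal offending set of $(V,E)$ by shrinking only over the ``new'' edges $D = E \setminus (S_0 \cup \{(s,r)\})$ so that $(s,r)$ cannot be dropped---is exactly the right idea, and the case split in the minimality check goes through as you describe. The appeal to finiteness of $E$ (via syntactic validity and finiteness of $V$) to obtain a $\subseteq$-minimal $D_1$ is the one place where an assumption beyond the bare statement is used; this is legitimate here since the paper works under the standing convention that all graphs are syntactically valid with finite $V$, but it is worth flagging explicitly, as you do.
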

\noindent If there is an upper bound for the offending flows, it can be narrowed.\endnote{Un-set-offending-flows-bound-minus-subseteq} 
\begin{lemma}[Narrowed Upper Bound of Offending Flows]
\label{lemma-upperboundsubstract}
{Let $E'$ be a set of edges. 
If the offending flows are bounded, \ie if \ $\bigcup \setoffending \ m \ (V, E) \subseteq X$ holds, then
$\bigcup \setoffending \ m \ (V, E \setminus E') \subseteq X \setminus E'$.}
\end{lemma}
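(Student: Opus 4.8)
The plan is to split the statement into two independent observations and then combine them with elementary set reasoning. First I would observe that $E \setminus E' \subseteq E$, so Lemma~\ref{lemma:mono-union-offending-flows} (monotonicity of offending flows) applies directly and gives $\bigcup \setoffending \ m \ (V, E \setminus E') \subseteq \bigcup \setoffending \ m \ (V, E)$. Chaining this inclusion with the hypothesis $\bigcup \setoffending \ m \ (V, E) \subseteq X$ yields the first half of what is needed, namely $\bigcup \setoffending \ m \ (V, E \setminus E') \subseteq X$.

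Second, I would unfold the definition of $\setoffending$ for the graph $(V, E \setminus E')$. Every member $F$ of this set satisfies, as the very first conjunct of the set-builder condition, $F \subseteq E \setminus E'$ (the ambient edge set in ``$F \subseteq E$'' being instantiated to $E \setminus E'$ once the graph is fixed). In particular $F \cap E' = \emptyset$ for each such $F$, and therefore $\left(\bigcup \setoffending \ m \ (V, E \setminus E')\right) \cap E' = \emptyset$. Finally, a set that is contained in $X$ and disjoint from $E'$ is contained in $X \setminus E'$; applying this to $\bigcup \setoffending \ m \ (V, E \setminus E')$ gives the claim.

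I expect no real obstacle here: the only step that deserves a moment's attention is extracting the purely definitional fact $F \subseteq E \setminus E'$ from the definition of $\setoffending$, but this is immediate and requires no monotonicity argument. Everything else — the transitivity of $\subseteq$, and the fact that containment plus disjointness yields containment in the set difference — is routine.
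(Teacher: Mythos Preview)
Your proposal is correct and uses the same three ingredients as the paper: monotonicity (Lemma~\ref{lemma:mono-union-offending-flows}), the definitional fact that every $F \in \setoffending \ m \ (V, E \setminus E')$ is a subset of $E \setminus E'$, and the hypothesis. The only difference is cosmetic ordering: the paper subtracts $E'$ from both sides of the monotonicity inclusion and then simplifies the left-hand side via the disjointness, whereas you first chain monotonicity with the hypothesis to land in $X$ and then invoke disjointness to pass to $X \setminus E'$; both routes are equally short and amount to the same argument.
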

\begin{proof}
From Lemma~\ref{lemma:mono-union-offending-flows}, we have $\bigcup \setoffending \ m \ (V, E \setminus E') \subseteq \bigcup \setoffending \ m \ (V, E)$. 
This implies that $\left( \bigcup \setoffending \ m \ (V, E \setminus E') \right) \setminus E' \subseteq \left( \bigcup \setoffending \ m \ (V, E) \right) \setminus E'$.
Since the set of offending flows only returns subsets of the graph's edges, the left hand side can be simplified: 
$\bigcup \setoffending \ m \ (V, E \setminus E') \subseteq \left( \bigcup \setoffending \ m \ (V, E) \right) \setminus E'$. 
From the assumption, it follows that $\left( \bigcup \setoffending \ m \ (V, E) \right) \setminus E' \subseteq X \setminus E'$.
We finally obtain
\begin{IEEEeqnarray*}{l}
\bigcup \setoffending \ m \ (V, E \setminus E') \subseteq \left( \bigcup \setoffending \ m \ (V, E) \right) \setminus E' \subseteq X \setminus E'
\end{IEEEeqnarray*}
by transitivity.
\end{proof}

\begin{definition}[Security Invariants]
\label{def:securityinvarinatlist}
We call a finite list of security invariants $M = [m_1, m_2, ..., m_k]$ a network's security invariants. 
The functions $\getIFS \ M$ (and $\getACS \ M$) return all $m \in M$ with an IFS (and ACS, respectively) security strategy.
Additionally, we abbreviate all sets of offending flows for all security invariants with $\getoffending \ M \ G = \bigcup_{m \in M} \setoffending \ m \ G$. 
Similarly to $\setoffending$, it denotes a set of sets.
\end{definition}

\subsection{Stateful Policy Implementation}
\label{subsec:statefulpolicy}
We define a stateful policy similarly to a directed policy. 
\begin{definition}[Stateful Policy]
A stateful policy $\Tau = (V,\, E_\tau,\, E_\sigma)$ is a triple consisting of the networked hosts $V$, the flows $E_\tau$, and the stateful flows $E_\sigma \subseteq E_\tau$. 
\end{definition}

\noindent
The meaning of $E_\sigma$ is that these flows are allowed to be stateful. 
We consider the stateful flows $E_\sigma$ as ``upgraded'' flows, hence $E_\sigma \subseteq E_\tau$. 
This means that if $(s, r) \in E_\sigma$, flows in the opposite direction, \ie $(r, s)$ may exist. 
For a set of edges $X$, we define the \emph{backflows} of $X$ as $\overleftarrow{X} = \lbrace (r,s) \mid (s,r) \in X \rbrace$. 
Hence, the semantics of $E_\sigma$ can be described as that both the flows $E_\sigma$ and $\overleftarrow{E_\sigma}$ may exist. 
We define a mapping that translates a stateful policy $\Tau$ to a directed policy $G$ as 
$\alpha \ \Tau = (V,\, E_\tau \cup E_\sigma \cup \overleftarrow{E_\sigma})$. 

\begin{example}
The ultimate goal is to translate a directed policy $G = (V,\, E)$ to a stateful implementation $\Tau = (V,\, E_\tau,\, E_\sigma)$ that contains as many stateful flows $E_\sigma$ as possible without introducing security flaws. 
The trivial choice is $\Tau_{\mathrm{triv}} = (V,\, E,\, \emptyset)$. 
It fulfills all security invariants because $\alpha \ \Tau_{\mathrm{triv}} = G$. 
Since $E_\sigma = \emptyset$, it does not maximize the stateful flows. 
\end{example}

\noindent
Before discussing requirements for the compliance of $\Tau$ and $G$, we first have to define the requirements for a \emph{syntactically valid} stateful security policy.\endnote{valid-stateful-policy, stateful-policy-compliance} 
All nodes mentioned in $E_\tau$ and $E_\sigma$ must be listed in $V$. 
The flows $E_\tau$ must be allowed by the directed policy, hence $E_\tau \subseteq E$, which also implies $E_\sigma \subseteq E$ by transitivity. 
%
%
The nodes in $\Tau$ are equal to the nodes in $G$. 
This implies that $E_\tau$ and $E_\sigma$ are finite\endnote{valid-stateful-policy.finite-$\ast$}. 
In the rest of this paper, we always assume that $\Tau$ is syntactically valid. 

From these conditions, we conclude that $\Tau$ and $G$ are similar and $\Tau$ syntactically introduces neither new hosts nor flows. 
Semantically, however, $\alpha \ \Tau$ adds $\overleftarrow{E_\sigma}$, which might introduce new flows. 
Hence, the edges of $\alpha \ \Tau$ need not be a subset of $G$'s edges (nor vice versa).

\section{Requirements for Stateful Policy Implementation}
\label{sec:requiremensstateful}
We assume that $G$ is a \emph{valid policy}. 
In addition to being syntactically valid, that means that all security invariants must be fulfilled, \ie $ \forall m \in M. \ m \ G$. 
We derive requirements to verify that a stateful policy $\Tau$ is a proper stateful implementation of $G$ without introducing security flaws.  

\subsection{Requirements for Information Flow Security Compliance}
Information leakages are critical and can occur in subtle ways. 
For example, the widely used transport protocol TCP detects data loss by sending acknowledgment packages. 
If $A$ establishes a TCP connection to $B$, then even if $B$ sends no payload, arbitrary information can be transmitted to $A$, \eg via timing channels, TCP sequence numbers, or retransmits. 
Therefore, we treat information flow security requirements carefully: 
When considering backflows, all information flow security invariants must still be fulfilled. 
\begin{IEEEeqnarray}{l}
\label{eq-ifs:fulfilled}
 \forall m \in \getIFS \  M. \ \ m \ (\alpha \ \Tau)
\end{IEEEeqnarray}

\subsection{Requirements for Access Control Strategies}
In contrast, the requirements for access control invariants can be slightly relaxed: 
If $A$ accesses $B$, $A$ might expect an answer from $B$ for its request. 
If $B$'s answer is transmitted via the connection that $A$ established, $B$ does not access $A$ on its own initiative. 
Only the expected answer is transmitted back to $A$. 
If $A$'s software contains no vulnerability that $B$ could exploit with its answer, no access violation occurs.\footnote{Note that we make an important assumption here. This assumption is justified as we only work on the network level and do not consider the application level, which is also the correct abstraction for network administrators when configuring network security mechanisms. It also implies that, as always, vulnerable applications with access to the Internet can cause severe damage.} 
This behavior is widely deployed in many private and enterprise networks by the standard policy that internal hosts can access the Internet and receive replies, but the Internet cannot initiate connections to internal hosts. 

Therefore, we can formulate the requirement for ACS compliance. 
Access control violations caused by stateful backflows can be tolerated.
However, \emph{negative side effects} must not be introduced by permitting these backflows. 
First, we present an example of a negative side effect.
Second, we derive a requirement for verifying the lack of side effects. 

\begin{example}
We examine a building automation network. 
Let $B$ be the master controller, $A$ a door locking mechanism, and $C$ a log server that records who enters and who leaves the building. 
The controller $B$ decides when the door should be opened and what to log. 
%
The directed policy is described by $G = (\lbrace A, B, C \rbrace,\  \lbrace (B, A), (B, C)\rbrace)$. 
The only security invariant $m$ is that $A$ is not allowed to transitively access $C$. 
Let~$\rightarrow^*$ denote the transitive closure of $\rightarrow$. 
Then, $m$ prohibits $A \rightarrow^* C$, but it does not prohibit $C \rightarrow^* A$. 
In this scenario, that means that the physically accessible locking mechanism must not tamper with the integrity of the log server.

Setting $E_\sigma = \lbrace (B,A) \rbrace$ gives $\Tau = (\lbrace A, B, C \rbrace,\ \lbrace (B, A), (B, C)\rbrace,\ \lbrace (B,A) \rbrace)$, and hence $\alpha \ \Tau = (\lbrace A, B, C \rbrace,\linebreak \lbrace (B, A), (B, C), (A, B)\rbrace)$. 
This attempt results in a negative side effect. 
We compute the offending flows for $m$ of $\alpha\ \Tau$ as $\lbrace \lbrace(B, C)\rbrace,\  \lbrace(A, B)\rbrace \rbrace = \lbrace \lbrace(B, C)\rbrace,\  \overleftarrow{E_\sigma} \rbrace$.
Clearly, a violation occurs in $\overleftarrow{E_\sigma}$. 
Additionally, there is a side effect: the flow from $B$ to $C$ could now cause a violation. 
Applied to our scenario, this means that in case the locking mechanism sends forged data to the controller, that data could end up in the log.  
This is a negative side effect. 
Hence $(B,A)$ cannot securely be made stateful. 
For completeness, note that because $A$ is just a simple physical actor which only executes $B$'s commands, there is no need for bidirectional communication. 
On the other hand, $(B,C)$ can be made stateful without side effects. 
\end{example}
  
\noindent
We formalize the requirement of ``no negative side effects'' as follows: The violations caused by any subset of the backflows are at most these backflows themselves. 
\begin{IEEEeqnarray}{l}
\label{eq-acs:subsets}
\forall X \subseteq \overleftarrow{E_\sigma}.\ \forall F \in \getoffending \ (\getACS \ M) \ (V,\, E_\tau \cup E_\sigma \cup X). \ F \subseteq X
\end{IEEEeqnarray}
In particular, all offending access control violations are at most the stateful backflows. This is directly implied by the previous requirement by choosing $X$ to be $\overleftarrow{E_\sigma}$ (recall the definition of $\alpha$).  
\begin{IEEEeqnarray}{l}
\label{eq-acs:allset}
\bigcup \getoffending \ (\getACS \ M) \ (\alpha \ \Tau) \subseteq \overleftarrow{E_\sigma}
\end{IEEEeqnarray}
Also, considering all backflows individually, they cause no side effects, \ie the only violation added is the backflow itself. 
 \begin{IEEEeqnarray}{l}
\label{eq-acs:singletonset}
\forall (r, s) \in \overleftarrow{E_\sigma}. \bigcup \getoffending \ (\getACS \ M) \ (V,\, E_\tau \cup E_\sigma \cup \lbrace (r, s) \rbrace)  \subseteq \lbrace (r, s) \rbrace
\end{IEEEeqnarray} 
It is obvious that (\ref{eq-acs:subsets}) implies both (\ref{eq-acs:allset}) and (\ref{eq-acs:singletonset}).\endnote{stateful-policy-compliance.compliant-stateful-ACS-only-state-violations-union, stateful-policy-compliance.compliant-stateful-ACS-no-state-singleflow-side-effect} 
The condition of (\ref{eq-acs:subsets}) is imposed on all subsets, thus ruling out all possible undesired side effects. 

However, translating (\ref{eq-acs:subsets}) to executable code results in exponential runtime complexity, because it requires iterating over all subsets of $\overleftarrow{E_\sigma}$. 
This is infeasible for any large set of stateful flows. 
In this paper, we contribute a new formula\endnote{stateful-policy-compliance.compliant-stateful-ACS}, which implies (\ref{eq-acs:subsets}) and hence (\ref{eq-acs:allset}) and (\ref{eq-acs:singletonset}). 
It has a comparably low computational complexity and thus enables writing executable code for the automated verification of stateful and directed policies. 
\begin{IEEEeqnarray}{l}
\label{eq-acs:theallimplyACSformula}
\bigcup \getoffending \ (\getACS \ M) \ (\alpha \ \Tau ) \ \subseteq\ \overleftarrow{E_\sigma} \setminus E_\tau
\end{IEEEeqnarray}
Obviously, the runtime complexity of (\ref{eq-acs:theallimplyACSformula}) is significantly lower than (\ref{eq-acs:subsets}) (see \mbox{\S\hairspace\ref{sec:computationalcomplexity}}).
The formula also bears great resemblance to (\ref{eq-acs:allset}). 
We explain the intention of (\ref{eq-acs:theallimplyACSformula}) and prove that it implies (\ref{eq-acs:subsets}). 

Note that $\overleftarrow{E_\sigma} \setminus E_\tau = \overleftarrow{\lbrace(s, r) \in E_\sigma \ \vert \ (r, s) \notin E_\tau \rbrace}$ \endnote{backflows-filternew-flows-state}, which means that it represents the backflows of all flows that are not already in $E_\tau$. 
In other words, it represents only the newly added backflows. 
For example, consider the flows between students and employees in Figure~\ref{fig:intro:statefulpolicygraph}: 
no stateful flows are necessary as bidirectional flows are already allowed by the policy, and the newly added backflows are represented by the dashed edges. 
Therefore, (\ref{eq-acs:theallimplyACSformula}) requires that all introduced violations are only due to the newly added backflows. 
This requirement is sufficient to imply (\ref{eq-acs:subsets}).\endnote{stateful-policy-compliance.compliant-stateful-ACS-no-side-effects}

\begin{theorem}[Efficient ACS Compliance Criterion]
For ACS, verifying that all introduced violations are only due to the newly added backflows is sufficient to verify the lack of side effects. 
Formally, $(\ref{eq-acs:theallimplyACSformula})~\Longrightarrow~(\ref{eq-acs:subsets})$. 
\end{theorem}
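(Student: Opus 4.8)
The plan is to prove (\ref{eq-acs:subsets}) by fixing an arbitrary $X \subseteq \overleftarrow{E_\sigma}$ and an arbitrary $F \in \getoffending \ (\getACS \ M) \ (V,\, E_\tau \cup E_\sigma \cup X)$ and showing $F \subseteq X$. Since every such $F$ satisfies $F \subseteq \bigcup \getoffending \ (\getACS \ M) \ (V,\, E_\tau \cup E_\sigma \cup X)$, it suffices to establish the single inclusion $\bigcup \getoffending \ (\getACS \ M) \ (V,\, E_\tau \cup E_\sigma \cup X) \subseteq X$. Throughout I would use the syntactic-validity fact $E_\sigma \subseteq E_\tau$ to rewrite $E_\tau \cup E_\sigma \cup X = E_\tau \cup X$ and, recalling the definition of $\alpha$, $\alpha \ \Tau = (V,\, E_\tau \cup \overleftarrow{E_\sigma})$; also I would unfold $\getoffending \ (\getACS \ M) \ G = \bigcup_{m \in \getACS \ M} \setoffending \ m \ G$ so that the single-invariant Lemmas~\ref{lemma:mono-union-offending-flows} and~\ref{lemma-upperboundsubstract} can be applied to each $m \in \getACS \ M$ and then re-unioned.

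The heart of the argument is to express the edge set $E_\tau \cup X$ as $(E_\tau \cup \overleftarrow{E_\sigma}) \setminus E'$ for $E' = (\overleftarrow{E_\sigma} \setminus E_\tau) \setminus X$ --- that is, to obtain the $X$-graph from the $\alpha \ \Tau$-graph by deleting exactly those newly added backflows that are not in $X$. This is a short set-algebra check, using that $E'$ is disjoint from both $E_\tau$ and $X$ and that $X \subseteq \overleftarrow{E_\sigma}$. With this identity in hand, I would take assumption (\ref{eq-acs:theallimplyACSformula}), which (after unfolding $\getoffending$ and $\alpha$) says $\bigcup \setoffending \ m \ (V,\, E_\tau \cup \overleftarrow{E_\sigma}) \subseteq \overleftarrow{E_\sigma} \setminus E_\tau$ for each $m \in \getACS \ M$, and feed it into Lemma~\ref{lemma-upperboundsubstract} with bounding set $\overleftarrow{E_\sigma} \setminus E_\tau$ and subtracted set $E'$. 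This yields $\bigcup \setoffending \ m \ (V,\, E_\tau \cup X) \subseteq (\overleftarrow{E_\sigma} \setminus E_\tau) \setminus E'$, and since $(\overleftarrow{E_\sigma} \setminus E_\tau) \setminus E' = (\overleftarrow{E_\sigma} \setminus E_\tau) \cap X \subseteq X$, re-unioning over $m \in \getACS \ M$ gives $\bigcup \getoffending \ (\getACS \ M) \ (V,\, E_\tau \cup X) \subseteq X$, which is the inclusion we wanted. As a by-product the chain even yields the sharper bound $(\overleftarrow{E_\sigma} \setminus E_\tau) \cap X$.

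I expect the main obstacle to be nothing deep but purely bookkeeping: verifying the two set identities $(E_\tau \cup \overleftarrow{E_\sigma}) \setminus E' = E_\tau \cup X$ and $(\overleftarrow{E_\sigma} \setminus E_\tau) \setminus E' = (\overleftarrow{E_\sigma} \setminus E_\tau) \cap X$ correctly --- the delicate point being the possible overlap of $X$ with $E_\tau$ --- and being careful that Lemma~\ref{lemma-upperboundsubstract} (which itself rests on the monotonicity Lemma~\ref{lemma:mono-union-offending-flows}) is a statement about a single invariant $\setoffending \ m$, so it must be instantiated for each ACS invariant before the union defining $\getoffending$ is reassembled. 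Once $\bigcup \getoffending \ (\getACS \ M) \ (V,\, E_\tau \cup E_\sigma \cup X) \subseteq X$ is proved, (\ref{eq-acs:subsets}) is immediate, and with it (\ref{eq-acs:allset}) and (\ref{eq-acs:singletonset}) as already noted in the text.
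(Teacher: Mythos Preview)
Your proposal is correct and is essentially the paper's own argument: both proofs apply Lemma~\ref{lemma-upperboundsubstract} to assumption~(\ref{eq-acs:theallimplyACSformula}) with the subtracted set $(\overleftarrow{E_\sigma}\setminus E_\tau)\setminus X$ (your $E'$; the paper writes it as $(\overleftarrow{E_\sigma}\setminus E_\tau)\setminus X_1$, which is the same set since $X_1 = X\cap(\overleftarrow{E_\sigma}\setminus E_\tau)$). The only cosmetic difference is that the paper first explicitly splits $X$ into the part $X_2\subseteq E_\tau$ that is absorbed and the part $X_1\subseteq\overleftarrow{E_\sigma}\setminus E_\tau$ that survives, whereas you handle this absorption in one step via the set identity $(E_\tau\cup\overleftarrow{E_\sigma})\setminus E' = E_\tau\cup X$.
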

\begin{proof}
We assume (\ref{eq-acs:theallimplyACSformula}) and show (\ref{eq-acs:subsets}) for an arbitrary but fixed $X \subseteq \overleftarrow{E_\sigma}$. 
We need to show that $\forall F \in \getoffending \ (\getACS \ M) \ (V,\, E_\tau \cup E_\sigma \cup X). \ F \subseteq X$. 
We split $\overleftarrow{E_\sigma}$ into $\overleftarrow{E_\sigma} \ \setminus \ E_\tau$ and $\overleftarrow{E_\sigma} \setminus ( \overleftarrow{E_\sigma} \ \setminus \ E_\tau )$. 
Likewise, we can split $X$ into $X_1 \subseteq \overleftarrow{E_\sigma} \ \setminus \ E_\tau$ and $X_2 \subseteq \overleftarrow{E_\sigma} \setminus ( \overleftarrow{E_\sigma} \ \setminus \ E_\tau )$. 
Hence, $X_2 \subseteq E_\tau$ and immediately $E_\tau \cup X_2 = E_\tau$. 
This simplifies the goal as $X_2$ disappears from the edges: %
\vskip-18pt
\begin{IEEEeqnarray*}{l}%
  \forall F \in \getoffending \ (\getACS \ M) \ (V,\, E_\tau \cup E_\sigma \cup X_1). \ F \subseteq X
\end{IEEEeqnarray*}%
\vskip-2pt%
\noindent We show an even stricter version of the goal since $X = X_1 \cup X_2$. %
\vskip-18pt%
\begin{IEEEeqnarray*}{l}%
  \forall F \in \getoffending \ (\getACS \ M) \ (V,\, E_\tau \cup E_\sigma \cup X_1). \ F \subseteq X_1
\end{IEEEeqnarray*}
\vskip-2pt%
\noindent This directly follows\endnote{stateful-policy-compliance.compliant-stateful-ACS-no-side-effects-filternew-helper} by using Lemma~\ref{lemma-upperboundsubstract} and subtracting $(\overleftarrow{E_\sigma} \ \setminus \ E_\tau) \setminus X_1$ from (\ref{eq-acs:theallimplyACSformula}). 
\end{proof}

\section{Automated Stateful Policy Construction}
\label{sec:algorithm}
In this section, we present algorithms to calculate a stateful implementation of a directed policy for a given set of security invariants using (\ref{eq-ifs:fulfilled}) and (\ref{eq-acs:theallimplyACSformula}). 

Instead of a set, the algorithms' last parameter is a list because the order of the elements matters. 
We denote the list cons operator by ``$::$''. 
For example, ``$e::\mathit{es}$'' is the list with the first element $e$ and a remainder list $\mathit{es}$. 
Since lists can be easily converted to finite sets, we make this conversion implicit for brevity. 
For example, for a list $a$, we will write the stateful policy as $(V,\, E,\, a)$, where $a$ is implicitly converted to a finite set. 

\subsection{Information Flow Security Strategies}
We start by presenting an algorithm which selects stateful edges in accordance to the IFS security invariants. 
The algorithm filters a given list of edges for edges which fulfill (\ref{eq-ifs:fulfilled}). 
It also takes as input the directed policy $G$, the security invariants $M$, and a list of edges as accumulator $a$. 
\begin{IEEEeqnarray*}{lcl}
\textnormal{\texttt{filterIFS}} \ G \ M \ a \;\textnormal{\texttt{[]}} & \ = \ \ & a\\
\textnormal{\texttt{filterIFS}} \ G \ M \ a \ (e ::\mathit{es}) & \ = \ \ & \textnormal{\texttt{if}} \;\;
  \forall m \in \getIFS \ \  M. \ m \ \left(\alpha \ \left(V, E, e::a \right)\right)
   \;\; \textnormal{\texttt{then}} \\
\ & & \quad \textnormal{\texttt{filterIFS}} \ G \ M \ (e :: a) \ \mathit{es}\\
\ & & \textnormal{\texttt{else}} \\
\ & & \quad \textnormal{\texttt{filterIFS}} \ G \ M \ a \ \mathit{es}
\end{IEEEeqnarray*}
The accumulator, initially empty, returns the result in the end. 
It is the current set of selected stateful flows. 
The algorithm is designed such that (\ref{eq-ifs:fulfilled}) always holds for $\Tau = (V, E, a)$. 
It simply iterates over all elements $e$ of the input list and checks whether the formula also holds if $e$ is added to $a$. 
If so, $e$ is added to the accumulator; otherwise, $a$ is left unchanged. 

Depending on the security invariants, multiple results are possible with this filtering criterion. 
The algorithm deterministically returns one solution. 
Users can influence the choice of edges that they want to be stateful by arranging the input list such that the preferred edges are listed first. 
If only one arbitrary solution is desired, lists and finite sets are interchangeable.

The algorithm is sound\endnote{filter-IFS-no-violations-correct} and complete.\endnote{filter-IFS-no-violations-maximal-allsubsets} 

\begin{lemma}[\texttt{filterIFS} Soundness]
If the directed policy $G = (V,\, E)$ is valid, 
then for any list $X \subseteq E$, the stateful policy $\Tau = (V,\, E,\, \textnormal{\texttt{filterIFS}}\ G \  M \;\textnormal{\texttt{[]}} \ X)$ fulfills (\ref{eq-ifs:fulfilled}). 
\end{lemma}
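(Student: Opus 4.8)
The plan is to prove a slightly generalized statement by structural induction on the input list $X$, strengthening the claim so that the accumulator ranges over arbitrary lists rather than being fixed to the empty one. Concretely, I would show: for every list $a$ with $a \subseteq E$ and $\forall m \in \getIFS\ M.\ m\ (\alpha\ (V, E, a))$, and every list $X \subseteq E$, we have $\forall m \in \getIFS\ M.\ m\ (\alpha\ (V, E, \textnormal{\texttt{filterIFS}}\ G\ M\ a\ X))$. The lemma then follows by instantiating $a$ with the empty list: $\textnormal{\texttt{[]}} \subseteq E$ holds trivially, and $\alpha\ (V, E, \emptyset) = (V,\, E \cup \emptyset \cup \overleftarrow{\emptyset}) = (V, E) = G$, which satisfies every $m \in M$ — hence every $m \in \getIFS\ M$ since $\getIFS\ M \subseteq M$ — because $G$ is assumed to be a valid policy. (Syntactic validity of $\Tau$ is also immediate, as the result is a sublist of $X \subseteq E$, but that is not what the claim asks for.)

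For the induction, the base case $X = \textnormal{\texttt{[]}}$ is immediate: $\textnormal{\texttt{filterIFS}}\ G\ M\ a\ \textnormal{\texttt{[]}} = a$, and the precondition on $a$ is exactly the goal. For the step case $X = e :: \mathit{es}$, note that $X \subseteq E$ yields both $e \in E$ and $\mathit{es} \subseteq E$, and I distinguish the two branches of the \texttt{if}. If the test $\forall m \in \getIFS\ M.\ m\ (\alpha\ (V, E, e::a))$ succeeds, then $\textnormal{\texttt{filterIFS}}\ G\ M\ a\ (e::\mathit{es}) = \textnormal{\texttt{filterIFS}}\ G\ M\ (e::a)\ \mathit{es}$; here $e :: a \subseteq E$ because $e \in E$ and $a \subseteq E$, and the invariant $\forall m \in \getIFS\ M.\ m\ (\alpha\ (V, E, e::a))$ holds precisely because the test succeeded, so the induction hypothesis applied with accumulator $e::a$ and list $\mathit{es}$ closes this case. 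If the test fails, then $\textnormal{\texttt{filterIFS}}\ G\ M\ a\ (e::\mathit{es}) = \textnormal{\texttt{filterIFS}}\ G\ M\ a\ \mathit{es}$, and the induction hypothesis applies directly with the unchanged accumulator $a$, whose precondition we still have, and list $\mathit{es}$.

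The only genuinely delicate point is getting the generalization right: the statement as given pins the accumulator to \textnormal{\texttt{[]}}, but the recursion feeds a growing accumulator back into itself, so induction on $X$ alone does not go through — one must carry the invariant ``the $E_\sigma$-side of the current $\Tau$ already satisfies (\ref{eq-ifs:fulfilled})'' through the recursion as an extra hypothesis. Everything else is routine bookkeeping: that $E_\tau = E$ is held fixed throughout, that $\alpha\ (V, E, a)$ unfolds to $(V,\, E \cup a \cup \overleftarrow{a})$ as expected, and that $\getIFS$ only ever selects a sublist of $M$. Note that monotonicity of the invariants (Definition~\ref{def:securityinvariant}) is not used here; it is what will be needed for the companion completeness/maximality statement.
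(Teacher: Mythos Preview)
Your proof is correct and follows the standard accumulator-generalization pattern one would expect for this kind of tail-recursive definition. The paper itself does not spell out a proof in the text; it simply records that the result is machine-verified in the accompanying Isabelle/HOL development (endnote \textit{filter-IFS-no-violations-correct}), and the natural Isabelle proof of such a statement proceeds exactly as you describe: induction on the input list with the accumulator universally quantified and the loop invariant carried as an additional premise.
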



\begin{lemma}[\texttt{filterIFS} Completeness]
For $G = (V,\, E)$, 
let $E_\sigma = \textnormal{\texttt{filterIFS}} \ G \ M \ E$. 
Then, no non-empty subset can be added to $E_\sigma$ without violating (\ref{eq-ifs:fulfilled}). %
\begin{IEEEeqnarray*}{l}
\forall X \subseteq E \setminus E_\sigma, \  X \neq \emptyset. \ 
              \neg \forall m \in \getIFS \ M \ \left(\alpha \ (V,\, E,\, E_\sigma \cup X)\right)
\end{IEEEeqnarray*}
\end{lemma}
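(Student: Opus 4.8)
The plan is to exploit the accumulator discipline of \texttt{filterIFS} together with the monotonicity of security invariants. The governing observation is that an edge $e$ ends up \emph{outside} the result precisely because, at the moment \texttt{filterIFS} examined $e$, adding it to the then-current accumulator would have falsified~(\ref{eq-ifs:fulfilled}); and since the accumulator only ever grows, this ``local'' violation is inherited by every larger set of stateful flows. So if $X$ is non-empty, any $e \in X$ supplies a concrete IFS invariant that is already broken by $E_\sigma \cup X$.

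Concretely, I would first prove two auxiliary facts about \texttt{filterIFS} by induction on its list argument (equivalently, on the recursion). The first is accumulator monotonicity: $a \subseteq \texttt{filterIFS}\ G\ M\ a\ \mathit{es}$, i.e., nothing is ever dropped from the accumulator. The second, built on the first, is a characterisation of rejected edges: if $e$ occurs in $\mathit{es}$ but $e \notin \texttt{filterIFS}\ G\ M\ a\ \mathit{es}$, then there is a ``stage accumulator'' $a_e$ with $a_e \subseteq \texttt{filterIFS}\ G\ M\ a\ \mathit{es}$ and $\neg\,(\forall m \in \getIFS\ M.\ m\ (\alpha\ (V,\, E,\, e :: a_e)))$. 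The violation clause is exactly the guard of the \texttt{else}-branch in the definition of \texttt{filterIFS}, and the containment $a_e \subseteq$ final result comes from the first fact: if $e$ is the head of $\mathit{es}$ and the test had passed, $e$ would enter the accumulator and hence survive to the end, contradicting $e \notin$ result; if $e$ lies in the tail, apply the induction hypothesis. Specialising to $a = \texttt{[]}$, $\mathit{es} = E$ gives these statements for $E_\sigma = \texttt{filterIFS}\ G\ M\ \texttt{[]}\ E$, the set named in the lemma.

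Now take a non-empty $X \subseteq E \setminus E_\sigma$ and fix some $e \in X$. Since $X \subseteq E$, the edge $e$ occurs in the list $E$, and $e \notin E_\sigma$, so the characterisation yields an invariant $m \in \getIFS\ M$ and a stage accumulator $a_e \subseteq E_\sigma$ with $\neg\, m\ (\alpha\ (V,\, E,\, e :: a_e))$. Unfolding the definition of $\alpha$ and using $e :: a_e \subseteq E$, the edge set of $\alpha\ (V,\, E,\, e :: a_e)$ is $E \cup \overleftarrow{a_e} \cup \lbrace\overleftarrow{e}\rbrace$; this is a subset of $E \cup \overleftarrow{E_\sigma} \cup \overleftarrow{X}$, the edge set of $\alpha\ (V,\, E,\, E_\sigma \cup X)$, because $a_e \subseteq E_\sigma$ gives $\overleftarrow{a_e} \subseteq \overleftarrow{E_\sigma}$ and $e \in X$ gives $\overleftarrow{e} \in \overleftarrow{X}$. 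The contrapositive of monotonicity (Definition~\ref{def:securityinvariant}) then turns $\neg\, m\ (\alpha\ (V,\, E,\, e :: a_e))$ into $\neg\, m\ (\alpha\ (V,\, E,\, E_\sigma \cup X))$, so $\forall m \in \getIFS\ M.\ m\ (\alpha\ (V,\, E,\, E_\sigma \cup X))$ fails, which is the claim.

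I expect the only real obstacle to be the bookkeeping in the characterisation lemma: one must thread the accumulator through the recursion and phrase the induction hypothesis so that the stage accumulator of a rejected edge is recognised as a subset of the \emph{final} output, not merely of some intermediate state — which is exactly why accumulator monotonicity has to be proved first and then used inside that induction. Everything after that (picking $e$, unfolding $\alpha$, the subset inclusion, one invocation of monotonicity) is routine.
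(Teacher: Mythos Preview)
Your argument is correct. The paper itself gives no in-text proof of this lemma; it only records an endnote pointing to the Isabelle theory (\texttt{filter\--IFS\--no\--violations\--maximal\--allsubsets}), so there is nothing written to compare against directly. Your decomposition---(i) accumulator monotonicity, (ii) a rejection-characterisation lemma producing for each discarded edge $e$ a stage accumulator $a_e \subseteq E_\sigma$ at which the guard failed, and (iii) one use of the monotonicity clause of Definition~\ref{def:securityinvariant} to transport the failure from $\alpha(V,E,e::a_e)$ to $\alpha(V,E,E_\sigma \cup X)$---is the natural shape for an accumulator induction of this kind and is almost certainly how the mechanised proof is structured. The edge-set inclusion you compute after unfolding $\alpha$ is right (since $e::a_e \subseteq E$ and $E_\sigma \cup X \subseteq E$, the $E_\tau \cup E_\sigma$ component collapses to $E$ on both sides, leaving only the backflow comparison $\overleftarrow{a_e} \cup \lbrace\overleftarrow{e}\rbrace \subseteq \overleftarrow{E_\sigma} \cup \overleftarrow{X}$), and the bookkeeping concern you anticipate---phrasing the induction so the stage accumulator is a subset of the \emph{final} output---is indeed the only step that needs care.
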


\subsection{Access Control Strategies}
The algorithm \texttt{filterACS} follows the same principles as \texttt{filterIFS}. 
%
\begin{IEEEeqnarray*}{lcl}
\textnormal{\texttt{filterACS}} \ G \ M \ a \;\textnormal{\texttt{[]}} & \ = \ \ & a\\
\textnormal{\texttt{filterACS}} \ G \ M \ a \ (e :: \mathit{es}) & \ = \ \ & \\
\IEEEeqnarraymulticol{3}{l}{
\qquad\quad 	\textnormal{\texttt{if}} \;\;
        e \notin \overleftarrow{E} \ \wedge \  \left(\forall F \in \getoffending \ (\getACS \ M) \ (\alpha (V,\, E,\, e :: a)). \ F \subseteq \overleftarrow{e :: a}\right)
	\;\; \textnormal{\texttt{then}}
}\\
\IEEEeqnarraymulticol{3}{l}{\qquad\quad \quad \textnormal{\texttt{filterACS}} \ G \ M \ (e :: a) \ \mathit{es}}\\
\IEEEeqnarraymulticol{3}{l}{\qquad\quad \textnormal{\texttt{else}}}\\
\IEEEeqnarraymulticol{3}{l}{\qquad\quad \quad \textnormal{\texttt{filterACS}} \ G \ M \ a \ \mathit{es}}
\end{IEEEeqnarray*}
As previously, the order of the elements in the list influences the choice of calculated stateful edges. 
Edges listed first are preferred. 
The algorithm is sound\endnote{filter-compliant-stateful-ACS-correct} and complete.\endnote{filter-compliant-stateful-ACS-maximal-allsubsets} 

\begin{lemma}[\texttt{filterACS} Soundness]
If the directed policy $G = (V,\, E)$ is valid, 
then for any list $X \subseteq E$, the stateful policy $\Tau = (V,\, E,\, \textnormal{\texttt{filterACS}}\ G \  M \;\textnormal{\texttt{[]}} \ X)$ fulfills (\ref{eq-acs:theallimplyACSformula}). 
\end{lemma}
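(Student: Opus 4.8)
The plan is to prove, by structural induction on the list argument while generalizing over the accumulator, a loop invariant for \texttt{filterACS}: for every list $\mathit{es}$ all of whose elements lie in $E$ and every accumulator $a$ with
$$I(a)\;\equiv\; a\subseteq E \;\wedge\; \overleftarrow{a}\cap E=\emptyset \;\wedge\; \textstyle\bigcup \getoffending\ (\getACS\ M)\ (\alpha\ (V,E,a))\ \subseteq\ \overleftarrow{a},$$
the output $\textnormal{\texttt{filterACS}}\ G\ M\ a\ \mathit{es}$ again satisfies $I$. The lemma is then the instance $\mathit{es} := X$, $a := \textnormal{\texttt{[]}}$: the side condition on $\mathit{es}$ is the hypothesis $X \subseteq E$, and $I(\textnormal{\texttt{[]}})$ holds because $\alpha\ (V,E,\textnormal{\texttt{[]}}) = G$ is a valid policy, so every ACS invariant holds on it and therefore $\bigcup \getoffending\ (\getACS\ M)\ G = \emptyset$. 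Writing $E_\sigma$ for the result and recalling $E_\tau = E$, the second conjunct of $I(E_\sigma)$ yields $\overleftarrow{E_\sigma} = \overleftarrow{E_\sigma}\setminus E_\tau$, so its third conjunct is exactly $(\ref{eq-acs:theallimplyACSformula})$.

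Before the induction I would record the rewriting $\alpha\ (V,E,e::a) = (V,\ E\cup\overleftarrow{e::a})$, which is valid whenever $(e::a)\subseteq E$ since then $E\cup(e::a)=E$; this is what aligns the algorithm's local guard with the target formula. The base case $\mathit{es}=\textnormal{\texttt{[]}}$ is immediate, as the output is $a$ and $I(a)$ is assumed. In the step on $e::\mathit{es}$ (so $e\in E$), if the guard fails the algorithm recurses on $a$, leaving $I(a)$ untouched, and the induction hypothesis applied to $\mathit{es}$ finishes. If the guard holds, it recurses on $e::a$, and I verify $I(e::a)$: $(e::a)\subseteq E$ is immediate from $a\subseteq E$ and $e\in E$; $\overleftarrow{e::a}\cap E=\emptyset$ follows from $\overleftarrow{a}\cap E=\emptyset$ together with the guard's first conjunct $e\notin\overleftarrow{E}$, which says precisely that the reverse edge of $e$ is not in $E$; and the third conjunct of $I(e::a)$ is, after rewriting $\alpha$ as above and reading $\forall F\in\getoffending\ (\getACS\ M)\ (\cdot).\ F\subseteq\overleftarrow{e::a}$ as the inclusion $\bigcup \getoffending\ (\getACS\ M)\ (\cdot)\subseteq\overleftarrow{e::a}$, exactly the guard's second conjunct. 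One then invokes the induction hypothesis on $\mathit{es}$ with accumulator $e::a$.

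I do not anticipate a real obstacle; the only delicate point is the bookkeeping between the algorithm, phrased on lists with the running guard $e\notin\overleftarrow{E}$, and the target $(\ref{eq-acs:theallimplyACSformula})$, phrased with $\overleftarrow{E_\sigma}\setminus E_\tau$. The key observation is that this guard, enforced at every accepted edge, collectively forces $\overleftarrow{E_\sigma}\cap E=\emptyset$, which is precisely what upgrades the weaker bound $\bigcup \getoffending\ (\getACS\ M)\ (\alpha\ \Tau)\subseteq\overleftarrow{E_\sigma}$ (the analogue of $(\ref{eq-acs:allset})$) into $(\ref{eq-acs:theallimplyACSformula})$. Note that, unlike the efficient-criterion theorem, this soundness proof needs neither Lemma~\ref{lemma:mono-union-offending-flows} nor Lemma~\ref{lemma-upperboundsubstract}: each recursive call re-evaluates the guard against the current $\alpha\ (V,E,a)$, so the offending-flows bound in $I$ is re-established from scratch at each step rather than propagated through monotonicity.
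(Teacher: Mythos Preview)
Your proposal is correct. The paper itself does not spell out a proof of this lemma, deferring instead to the Isabelle/HOL formalization via an endnote; your argument---structural induction on the input list with the accumulator generalized and the loop invariant $I(a)$ combining $a\subseteq E$, $\overleftarrow{a}\cap E=\emptyset$, and the offending-flows bound---is precisely the canonical shape such a proof takes for a tail-recursive accumulator function, and your bookkeeping (in particular, reading $e\notin\overleftarrow{E}$ as $\overleftarrow{\{e\}}\cap E=\emptyset$ and collapsing $\overleftarrow{E_\sigma}\setminus E_\tau$ to $\overleftarrow{E_\sigma}$ at the end) is exactly what is needed.
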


\noindent
To show that \texttt{filterACS} computes a maximal solution, we must first identify the candidates that \texttt{filter\-ACS} might overlook. 
Flows that are already bidirectional need not be stateful. 
As illustrated in the example of Figure~\ref{fig:intro:statefulpolicygraph}, no added value is created if stateful connections between students and employees were allowed as no communication restrictions exist between these groups in the first place. 
Hence only $E \setminus \overleftarrow{E}$ is considered. 
\begin{lemma}[\texttt{filterACS} Completeness]
For $G = (V,\, E)$, 
let $E_\sigma = \textnormal{\texttt{filterACS}} \ G \ M \ E$. 
Then, no non-empty subsets $X \subseteq E \setminus ( E_\sigma \cup \overleftarrow{E})$ can be added to $E_\sigma$ without violating (\ref{eq-acs:theallimplyACSformula}). %
\begin{IEEEeqnarray*}{l}
\forall X \subseteq E \setminus ( E_\sigma \cup \overleftarrow{E} ), \  X \neq \emptyset. \ \ 
\neg \left( \bigcup \getoffending \ (\getACS \ M) \ (\alpha \ (V,\, E,\, E_\sigma \cup X) ) \ \subseteq \ \overleftarrow{E_\sigma \cup X} \setminus E\right)
\end{IEEEeqnarray*}
\end{lemma}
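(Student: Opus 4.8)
The plan is to refute the displayed inclusion by exhibiting a single flow $f$ that belongs to $\bigcup \getoffending\ (\getACS\ M)\ (\alpha\ (V,\, E,\, E_\sigma \cup X))$ but not to $\overleftarrow{E_\sigma \cup X}\setminus E$. Such an $f$ will be produced from the moment \texttt{filterACS} \emph{rejected} an edge of $X$, and the reason it remains a witness under the larger stateful set is monotonicity of the offending flows together with the fact that $f$ turns out to be an \emph{original} edge of the policy.

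First I would establish two routine facts about the recursion of \texttt{filterACS}. (i)~The accumulator only grows during the recursion, so its value $a_e$ at the step where a particular edge $e$ is examined satisfies $a_e \subseteq E_\sigma$ (the final result), and of course $E_\sigma \subseteq E$. (ii)~Every edge of the input list $E$ (which we may take to be duplicate-free) is examined exactly once; hence if $e \in E$ but $e \notin E_\sigma$, then $e$ failed the \texttt{if}-test when examined. For our $e$ the first conjunct $e \notin \overleftarrow{E}$ of that test is true by the hypothesis $X \subseteq E \setminus (E_\sigma \cup \overleftarrow{E})$, so the failure must lie in the second conjunct: there is an $F \in \getoffending\ (\getACS\ M)\ (\alpha\ (V,\, E,\, e :: a_e))$ with $F \not\subseteq \overleftarrow{e :: a_e}$.

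Now fix a non-empty $X \subseteq E \setminus (E_\sigma \cup \overleftarrow{E})$, pick any $e \in X$, and obtain $F$ and a flow $f \in F$ with $f \notin \overleftarrow{e :: a_e} = \overleftarrow{\{e\} \cup a_e}$ as above. Since $\{e\} \cup a_e \subseteq E$ (using $e \in X \subseteq E$ and $a_e \subseteq E_\sigma \subseteq E$), the edge set of $\alpha\ (V,\, E,\, \{e\} \cup a_e)$ is exactly $E \cup \overleftarrow{\{e\} \cup a_e}$; as $F$ is contained in that edge set while $f \notin \overleftarrow{\{e\} \cup a_e}$, we conclude $f \in E$. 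Moreover $\{e\} \cup a_e \subseteq E_\sigma \cup X$, so the edge set of $\alpha\ (V,\, E,\, \{e\} \cup a_e)$ is a subset of that of $\alpha\ (V,\, E,\, E_\sigma \cup X)$; applying Lemma~\ref{lemma:mono-union-offending-flows} to each $m \in \getACS\ M$ and taking the union gives $f \in \bigcup \getoffending\ (\getACS\ M)\ (\alpha\ (V,\, E,\, E_\sigma \cup X))$. Finally $f \in E$ forces $f \notin \overleftarrow{E_\sigma \cup X}\setminus E$, so the inclusion fails, which is exactly the claim.

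The main obstacle is bookkeeping rather than depth: making precise the ``rejection yields a witness'' step, i.e.\ threading $a_e$ through the recursion of \texttt{filterACS}, proving $a_e \subseteq E_\sigma$, and checking that the first conjunct of the \texttt{if}-condition is discharged by the hypothesis on $X$ so that the failure genuinely sits in the offending-flows clause. Once that witness $f$ is in hand, everything reduces to Lemma~\ref{lemma:mono-union-offending-flows} and the trivial remark that a set of the form $(\cdots)\setminus E$ cannot contain a member of $E$.
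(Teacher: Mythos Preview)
Your argument is correct. The paper itself does not spell out a proof of this lemma; it only states the result and points to the Isabelle theory file via an endnote (\texttt{filter-compliant-stateful-ACS-maximal-allsubsets}). Your reconstruction---picking an edge $e\in X$, using the fact that $e\notin\overleftarrow{E}$ to localise the rejection to the offending-flows clause, extracting a witness $f\in E$ from that failure, and then lifting $f$ into the offending flows of the larger graph via Lemma~\ref{lemma:mono-union-offending-flows}---is exactly the kind of argument the formal proof unwinds to, and the bookkeeping you flag (accumulator monotonicity $a_e\subseteq E_\sigma$, each list element examined once) is precisely what an induction on the recursion of \texttt{filterACS} delivers. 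The only cosmetic point is that the lemma as stated writes $\texttt{filterACS}\ G\ M\ E$ with the initial accumulator \texttt{[]} suppressed; your proof implicitly (and correctly) reads it that way.
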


\subsection{IFS and ACS Combined}
Finally, we combine the previous section's algorithms to derive algorithms which compute a solution that satisfies all requirements of a stateful policy.

\noindent
The first algorithm\endnote{generate-valid-stateful-policy-IFSACS} simply chains \texttt{filterIFS} and \texttt{filterACS}. 
\begin{IEEEeqnarray*}{l}
\textnormal{\texttt{generate1}} \ G \ M \ e = \left(V,\, E,\, \textnormal{\texttt{filterACS}} \ G \ M \  (\textnormal{\texttt{filterIFS}} \ G \ M \ e)\right)
\end{IEEEeqnarray*}
The second algorithm\endnote{generate-valid-stateful-policy-IFSACS-2} takes the intersection of \texttt{filterIFS} and \texttt{filterACS}. 
\begin{IEEEeqnarray*}{l}
\textnormal{\texttt{generate2}} \ G \ M \ e = \left(V,\, E,\, \textnormal{\texttt{(filterACS}} \ G \ M \ e) \cap  \textnormal{\texttt{(filterIFS}} \ G \ M \ e) \right)
\end{IEEEeqnarray*}
\noindent
Both algorithms are sound.\endnote{generate-valid-stateful-policy-IFSACS-stateful-policy-compliance, generate-valid-stateful-policy-IFSACS-2-stateful-policy-compliance} 
It remains unclear whether both are equal in the general case. 
Furthermore, it is difficult to prove (or disprove) their completness, because both algorithms work on almost arbitrary functions $M$. 
However, we have formal proofs for the completeness of \texttt{filterACS} and \texttt{filterIFS} and the structure of \texttt{generate1} and \texttt{generate2} suggest completeness. 
In our experiments, \texttt{generate1} and \texttt{generate2} always calculated the same maximal solution. 

\begin{theorem}[\texttt{generate$\{$1,2$\}$} Soundness]
The algorithms \textnormal{\texttt{generate1}} and \textnormal{\texttt{generate2}} calculate a stateful policy that fulfills both IFS and ACS requirements. 
\end{theorem}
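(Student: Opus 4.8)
The plan is to derive both claims from the soundness lemmas for the two component filters (the \texttt{filterIFS} Soundness and \texttt{filterACS} Soundness lemmas), together with the monotonicity results already available: the monotonicity clause of Definition~\ref{def:securityinvariant}, Lemma~\ref{lemma:mono-union-offending-flows}, and Lemma~\ref{lemma-upperboundsubstract}. As in Section~\ref{sec:requiremensstateful} I assume $G = (V,E)$ is a valid policy and that the input list $e$ consists of edges of $G$, \ie $e \subseteq E$. Both algorithms return a triple of the form $(V,E,E_\sigma)$ whose third component is obtained by filtering sublists of $e$; since each filter only ever extends its (initially empty) accumulator by elements of its input list, $E_\sigma \subseteq e \subseteq E$, so $E_\tau = E$ and the returned $\Tau$ is syntactically valid (all nodes of $E$ lie in $V$ because $G$ is syntactically valid). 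It therefore suffices to verify the two semantic conditions, (\ref{eq-ifs:fulfilled}) for IFS and (\ref{eq-acs:theallimplyACSformula}) for ACS; the remaining ACS requirements (\ref{eq-acs:subsets}), (\ref{eq-acs:allset}) and (\ref{eq-acs:singletonset}) then follow by the theorem \emph{Efficient ACS Compliance Criterion} above.

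The one general fact I would isolate first is that the edge set of $\alpha \ (V,E,E_\sigma)$, namely $E \cup E_\sigma \cup \overleftarrow{E_\sigma}$, is monotone in $E_\sigma$ and collapses to $E \cup \overleftarrow{E_\sigma}$ whenever $E_\sigma \subseteq E$. Hence if $E_\sigma \subseteq E_\sigma'$, the edges of $\alpha \ (V,E,E_\sigma)$ are contained in those of $\alpha \ (V,E,E_\sigma')$, and by the monotonicity clause of Definition~\ref{def:securityinvariant} any invariant $m$ that holds on $\alpha \ (V,E,E_\sigma')$ also holds on $\alpha \ (V,E,E_\sigma)$.

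For \texttt{generate1} I would put $a := \textnormal{\texttt{filterIFS}}\ G\ M\ \textnormal{\texttt{[]}}\ e$ and $b := \textnormal{\texttt{filterACS}}\ G\ M\ \textnormal{\texttt{[]}}\ a$, so that $\textnormal{\texttt{generate1}}\ G\ M\ e = (V,E,b)$ with $b \subseteq a \subseteq E$. Since $a \subseteq E$, the \texttt{filterACS} soundness lemma yields (\ref{eq-acs:theallimplyACSformula}) directly for $(V,E,b)$. For IFS, \texttt{filterIFS} soundness gives $m\,(\alpha \ (V,E,a))$ for every $m \in \getIFS\ M$, and the monotonicity observation above (with $b \subseteq a$) upgrades this to $m\,(\alpha \ (V,E,b))$, \ie (\ref{eq-ifs:fulfilled}). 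For \texttt{generate2} I would put $a := \textnormal{\texttt{filterIFS}}\ G\ M\ \textnormal{\texttt{[]}}\ e$ and $c := \textnormal{\texttt{filterACS}}\ G\ M\ \textnormal{\texttt{[]}}\ e$, so $E_\sigma := c \cap a$ and the output is $(V,E,E_\sigma)$. The IFS condition follows exactly as before because $E_\sigma \subseteq a$. For ACS, \texttt{filterACS} soundness gives (\ref{eq-acs:theallimplyACSformula}) for $(V,E,c)$, which (using $c \subseteq E$) reads
\[
\bigcup \setoffending \ m \ (V,\ E \cup \overleftarrow{c}) \ \subseteq\ \overleftarrow{c} \setminus E \qquad \text{for every } m \in \getACS\ M .
\]
I would then apply Lemma~\ref{lemma-upperboundsubstract} to each such $m$, with upper bound $X = \overleftarrow{c} \setminus E$ and subtracting $E' := \overleftarrow{c} \setminus (E \cup \overleftarrow{E_\sigma})$; a short set computation, using $E_\sigma \subseteq c$ and $E_\sigma \subseteq E$, shows that $(E \cup \overleftarrow{c}) \setminus E' = E \cup \overleftarrow{E_\sigma}$ (the edge set of $\alpha \ (V,E,E_\sigma)$) and $X \setminus E' = \overleftarrow{E_\sigma} \setminus E$. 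Lemma~\ref{lemma-upperboundsubstract} then gives $\bigcup \setoffending \ m \ (V,\ E \cup \overleftarrow{E_\sigma}) \subseteq \overleftarrow{E_\sigma} \setminus E$ for every $m \in \getACS\ M$, and taking the union over these $m$ is exactly (\ref{eq-acs:theallimplyACSformula}) for $(V,E,E_\sigma)$.

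The main obstacle will be this last ACS step for \texttt{generate2}: replacing $c$ by the smaller set $E_\sigma = c \cap a$ shrinks the underlying graph but at the same time tightens the target bound from $\overleftarrow{c} \setminus E$ to $\overleftarrow{E_\sigma} \setminus E$, so plain monotonicity of offending flows (Lemma~\ref{lemma:mono-union-offending-flows}) does not suffice and one genuinely needs the narrowing of Lemma~\ref{lemma-upperboundsubstract}; the one fiddly point is verifying that $X \setminus E'$ collapses exactly to $\overleftarrow{E_\sigma} \setminus E$. The \texttt{generate1} case and both IFS arguments are, by contrast, routine consequences of monotonicity, and the final step — deducing (\ref{eq-acs:subsets}), (\ref{eq-acs:allset}) and (\ref{eq-acs:singletonset}) — is immediate from the theorem already proved.
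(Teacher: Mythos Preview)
Your argument is correct. The paper itself does not give an in-text proof of this theorem; it simply asserts soundness and defers to the Isabelle formalization via endnotes, so there is no paper proof to compare against in detail. Your decomposition --- invoking the \texttt{filterIFS}/\texttt{filterACS} soundness lemmas directly for the ``outer'' filter and then transporting the ``inner'' filter's guarantee along the inclusion $E_\sigma \subseteq a$ (resp.\ $E_\sigma \subseteq c$) via monotonicity (Definition~\ref{def:securityinvariant}) for IFS and via Lemma~\ref{lemma-upperboundsubstract} for ACS --- is exactly the natural route and matches what the formalization must be doing. The set computations you flag as ``fiddly'' for \texttt{generate2} (that $(E \cup \overleftarrow{c}) \setminus E' = E \cup \overleftarrow{E_\sigma}$ and $(\overleftarrow{c}\setminus E)\setminus E' = \overleftarrow{E_\sigma}\setminus E$) check out once one uses $\overleftarrow{E_\sigma}\subseteq\overleftarrow{c}$ and $E'\cap E=\emptyset$.
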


\begin{example}
Recall our running example. 
We illustrate how Figure~\ref{fig:intro:statefulpolicygraph} can be calculated from Figure~\ref{fig:intro:policygraph} and the security invariants. 
All ACS invariants impose only local---in contrast to transitive---access restrictions. 
Therefore, the ACS invariants lack side effects and \texttt{filterACS} selects all flows (excluding already bidirectional ones). 
The invariant that the file server stores confidential data also introduces no restrictions:  
Both $\mathit{filesSrv} \rightarrow \mathit{employees}$ and $\mathit{employees} \rightarrow \mathit{filesSrv}$ are allowed and since the employees are trusted, they can further distribute the data. 
Therefore, \texttt{filterIFS} applied on only this invariant correctly selects all flows. 
Up to this point, the network's functionality is maximized. 
However, since the printer is classified as information sink, it must not leak any data. 
Therefore, \texttt{filterIFS} applied to this invariant selects all but the flows to the printer. 
Ultimately, both \texttt{generate} algorithms compute\endnote{Impl\_\allowbreak{}List\_\allowbreak{}Playground\_\allowbreak{}ChairNetwork\_\allowbreak{}statefulpolicy\_\allowbreak{}example.thy} the same maximal stateful policy, illustrated in Figure~\ref{fig:intro:statefulpolicygraph}. 
The soundness and completeness of the running example is hence formally proven. 
The case study in Section~\ref{sec:casestudy} will focus on performance and feasibility in a large real-world example. 
\end{example}

\section{Computational Complexity}
\label{sec:computationalcomplexity}
The computational complexity of all presented formulae depends on the computational complexity of the security invariants $m \in M$. 
As we allow almost any function $m$ as security invariant, the computational complexity can be arbitrarily large. 
However, most of the security invariants we use in our daily business check a property over all flows in the network. 
Thus, the computational complexity of $m$ is linear in the number of edges, \ie $\BigO{}(\vert E \vert)$. 
The trivial computational complexity of $\setoffending$ is in \mbox{$\BigO{}(2^{\vert E \vert} \cdot \vert E \vert^2)$}, since it iterates over all subsets of $E$. 
However, given the structure of the security invariants we use, we provide proof\endnote{BLP-offending-set, CommunicationPartners-offending-set, ...} that the offending flows for our security invariants are uniquely defined \cite{diekmann2014forte}. 
They can be computed in \mbox{$\BigO{}(\vert E \vert)$}. 
The result is a singleton set whose inner set size is also in $\BigO{}(\vert E \vert)$. 
We present the computational complexity of our formulae and algorithms in this section for security invariants and offending flows with the mentioned complexity.\footnote{The computational complexity results are not formalized in Isabelle/HOL, because in its present state, there is no support for reasoning about asymptotic runtime behavior.} 
Our solution is not limited to these security invariants, but the computational complexity increases for more expensive security invariants.

We assume that set inclusion can be computed with the hedge union algorithm in $\BigO{}(k_i + k_j)$ for sets of size $k_{\{i,j\}}$. 
Since $E_\tau$ and $E_\sigma$ are bounded by $E$, set inclusion is in $\BigO{}(\vert E \vert)$.

Verifying information flow compliance, \ie (\ref{eq-ifs:fulfilled}), can be computed in $\BigO{}(\vert E \vert \cdot \vert M \vert)$. 
Hence, for a constant number of security invariants, the computational complexity is linear in the number of policy rules. 

To verify access control compliance, we first note that (\ref{eq-acs:subsets}) is in $\BigO{}(2^{\vert E \vert} \cdot \vert E \vert \cdot \vert M \vert)$ which is infeasible for a large policy. 
However, we provide (\ref{eq-acs:theallimplyACSformula}), which implies (\ref{eq-acs:subsets}), and can be computed in $\BigO{}(\vert E \vert \cdot \vert M \vert)$. 
Hence, for a constant number of security invariants, the computational complexity is linear in the number of policy rules. 

The \texttt{filter} and \texttt{generate$\{$1,2$\}$} algorithms only add $\BigO{}(\vert E \vert)$ to the complexity. 
Hence, for a constant number of security invariants, computing a stateful policy implementation from a directed policy is quadratic in the number of policy rules, which is feasible even for large policies with thousands of rules. 

\section{Case Study}
\label{sec:casestudy}
In a study, Wool~\cite{firwallerr2004} analyzed 37 firewall rule sets from telecommunications, financial, energy, media, automotive, and many other kinds of organization, collected in 2000 and 2001. 
The maximum observed rule set size was 2671, and the average rule set size was 144. 
Wool's study ``indicates that there are no good high-complexity rule sets''~\cite{firwallerr2004}. 
If in a scenario complicated rule sets are unavoidable, formal verification to assert their correctness is advisable. 

In this section, we analyze the firewall rule set of TUM's Chair for Network Architectures and Services. 
With a rule set size of approximately 2983 as of November 2013, this firewall configuration can be considered representatively large. 
Almost all rules are stateful, hence the firewall generally allows all established connections and only controls who is allowed to initiate a connection. 
We publish our complete data set, 
allowing others to reproduce our results and reuse the raw data for their research. 

\begin{figure}[!h]
  \centering
  		\includegraphics[width=\linewidth]{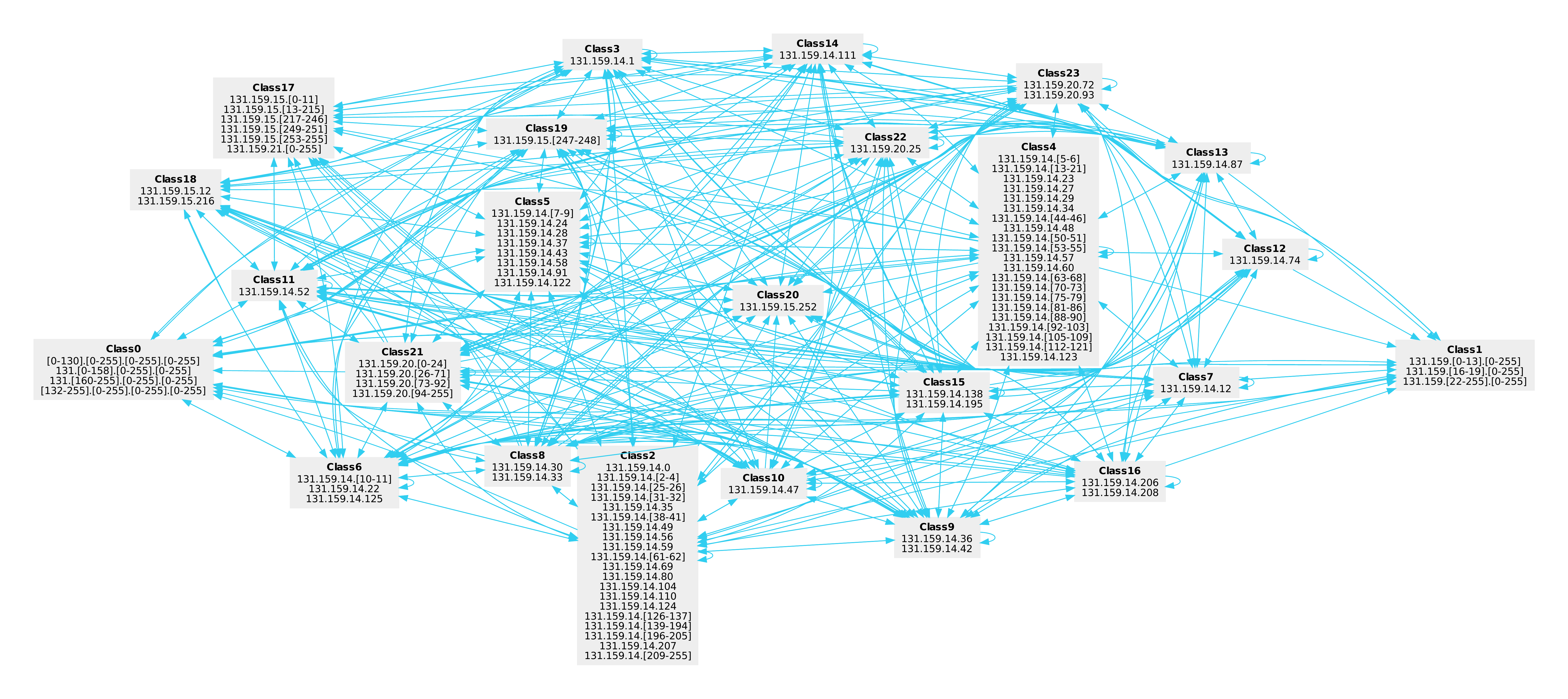}
  \caption{SSH landscape of the TUM Chair for Network Architectures and Services}
  \label{fig:eval:i8sshlandscape}
\end{figure}

\noindent
As there is no written formal security policy for our network, we reverse-engineered the security policy and invariants with the help of our system administrator. 
The firewall contains rules per IP range that permit the services which are accessible from some IP range. 
Most rules are similar to rule one in Figure~\ref{tab:intro:statefuliptables}. 
We regard the firewall rules about which hosts can initiate a connection as security policy. 
It is not unusual that the implementation is also the documentation~\cite[\S1]{cspfirewall}. 
We verify that the so derived security policy, \ie which hosts can initiate connections, corresponds to the stateful implementation, \ie all connections are stateful. 

In order to prepare the firewall rules as graph, we used \emph{ITval}~\cite{marmorstein2005itval} to first partition the IP space into classes with equivalent access rights~\cite{marmorstein2006firewall} which form the nodes of our policy. 
For each of these classes, we selected representatives and queried ITval for ``which hosts can this representative connect to'' and ``which hosts can connect to this representative''. 
This method is also suggested by Marmorstein \cite{marmorstein2006firewall}. 
The resulting IP ranges were mapped back to the classes. 
This generates the edges of the security policy graph. 
We asserted that these two queries result in the same graph. 
For brevity, we restrict our attention to the SSH landscape, \ie TCP port 22. 
The full data set is publicly available. 
The SSH landscape results in  a security policy with 24 nodes (sets of IP ranges with equal access rights) and 496 edges (permissions to establish SSH connections). 
The resulting graph is shown in Figure~\ref{fig:eval:i8sshlandscape}. 

A detailed discussion with our system administrator indicated that the graphical representation of the computed graph contains helpful information. 
It reveals that the computed policy does not exactly correspond to the firewall's configuration.
We could not clearly identify the cause for this discrepancy. 
However, the graph provides a sufficient approximation of our security policy. 
In the future, we will try to generate the graph using the approach by Tongaonkar, Niranjan, and Sekar \cite{tongaonkar2007inferring}, of which unfortunately no code is publicly available. 
In the long term, we see the need for formally verified means of translating network device configurations, such as firewall rule sets, SDN flow tables, routing tables, and vendor specific access control lists to formally accessible objects, such as graphs.

After having constructed the security policy, we implemented our security invariants. 
They state that our IP ranges form a big set of mostly collaborating hosts. 
As a general rule, internal hosts are protected from accesses from the outside world, 
but there are many exceptions. 

No IFS invariants exist and our ACS invariants cause no side effects. 
Note that we are evaluating neither the quality of our security policy nor the quality of our security invariants, but the quality of the stateful implementation in this large real-world scenario. 
As expected, our \texttt{generate$\{$1,2$\}$} algorithms identify all unidirectional flows as upgradable to stateful. 
This shows that the standard practice to declare (almost) all rules as stateful, combined with common simple invariants does not introduce security issues. 
For our invariants, our algorithms always generate a graph $\Tau$ such that $\alpha \ \Tau = (V,\ E \cup \overleftarrow{E})$. 
This means that in this scenario, we have a formal justification that all directed policy rules correspond to their stateful implementation, without any security concern. 
This maximizes the network's functionality without introducing security risks and is thus the optimal solution. 

This statement can be generalized to all networks without IFS invariants and without side effects in the ACS invariants. 
We provide formal proofs for both \texttt{generate$\{$1,2$\}$} algorithms.\endnote{generate-valid-stateful-policy-IFSACS-noIFS-noACSsideeffects-imp-fullgraph, generate-valid-stateful-policy-IFSACS-2-noIFS-noACSsideeffects-imp-fullgraph} 
Due to its simplicity, universality, and convenient implications for everyday use, we state this result explicitly. 

\begin{corollary}
If there are no information flow security invariants and all access control invariants of a directed policy lack side effects, a security policy can be smoothly implemented as stateful policy, without any security issues concerning state. 
\end{corollary}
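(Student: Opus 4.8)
The plan is to exhibit the obviously-maximal witness and verify it passes both compliance checks. Take the stateful policy $\Tau = (V,\, E,\, E)$, i.e.\ upgrade \emph{every} directed flow to a stateful one; this is exactly what \texttt{generate1} and \texttt{generate2} return when started from the full edge list. First I would record syntactic validity: $E_\sigma = E \subseteq E_\tau = E \subseteq E$, all endpoints occur in $V$, and finiteness of $E_\tau$ and $E_\sigma$ follows since $G$ is a valid (hence syntactically valid) policy. Then $\alpha\ \Tau = (V,\, E_\tau \cup E_\sigma \cup \overleftarrow{E_\sigma}) = (V,\, E \cup \overleftarrow{E})$, so the implementation realizes the full bidirectional closure of the directed policy; no larger $E_\sigma$ is possible, so this is also the functionally optimal solution.

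Next I would discharge the two compliance families. The information-flow side is immediate: with no IFS invariants, $\getIFS\ M$ is the empty list, so~(\ref{eq-ifs:fulfilled}), namely $\forall m \in \getIFS\ M.\ m\ (\alpha\ \Tau)$, holds vacuously. For access control, the target is~(\ref{eq-acs:subsets}); by the established implication $(\ref{eq-acs:theallimplyACSformula}) \Longrightarrow (\ref{eq-acs:subsets})$ it suffices to verify the efficient criterion~(\ref{eq-acs:theallimplyACSformula}), which for this $\Tau$ reads $\bigcup \getoffending\ (\getACS\ M)\ (V,\, E \cup \overleftarrow{E}) \ \subseteq\ \overleftarrow{E_\sigma} \setminus E_\tau = \overleftarrow{E} \setminus E$. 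This inclusion is precisely the statement that the ACS invariants introduce no side effects when the backflows are added to $G$: every access-control violation of the full bidirectional graph lies among the genuinely new backflows $\overleftarrow{E}\setminus E$. Plugging this into the implication above gives~(\ref{eq-acs:subsets}), and together with the vacuous IFS part, $\Tau$ is a fully compliant stateful implementation with $\alpha\ \Tau = (V,\, E \cup \overleftarrow{E})$, which is the claim.

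The only point needing care is the exact reading of ``the access control invariants lack side effects''. I would take it to mean precisely the inclusion $\bigcup \getoffending\ (\getACS\ M)\ (V,\, E \cup \overleftarrow{E}) \subseteq \overleftarrow{E}\setminus E$ used above --- equivalently, that~(\ref{eq-acs:theallimplyACSformula}) already holds for the maximal choice $E_\tau = E_\sigma = E$ --- in which case the corollary is a one-step specialization of the development so far. If instead the hypothesis were stated per invariant or per single flow (as in the building-automation example, where side effects are examined one flow at a time), the extra work is a routine aggregation: unfold $\getoffending\ (\getACS\ M)$ as $\bigcup_{m \in \getACS\ M} \setoffending\ m$ and combine the per-invariant (or per-flow) bounds into the global inclusion, using Lemma~\ref{lemma:mono-union-offending-flows} and Lemma~\ref{lemma-upperboundsubstract} to move between edge sets where needed. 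I do not anticipate a genuine obstacle: the content of the corollary is exactly that these two hypotheses are what make the trivially-maximal witness $(V,\, E,\, E)$ satisfy both compliance requirements.
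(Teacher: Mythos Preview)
Your approach is correct and matches the paper's own treatment: the paper does not spell out a proof in the text but simply asserts that under these hypotheses $\alpha\ (\texttt{generate}\ G\ M\ E) = (V,\ E \cup \overleftarrow{E})$ and defers to the Isabelle theory files; your direct verification of the maximal witness against (\ref{eq-ifs:fulfilled}) and (\ref{eq-acs:theallimplyACSformula}) is exactly the underlying content.

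One small inaccuracy worth flagging: you assert that \texttt{generate1} and \texttt{generate2} literally return $E_\sigma = E$. They do not, because \texttt{filterACS} contains the guard $e \notin \overleftarrow{E}$ and therefore never selects edges whose reverse is already in $E$; what the algorithms actually produce is $E_\sigma = E \setminus \overleftarrow{E}$ (the unidirectional edges only). This does not hurt your argument, since $\alpha\ (V,\,E,\,E) = \alpha\ (V,\,E,\,E\setminus\overleftarrow{E}) = (V,\,E\cup\overleftarrow{E})$, and your chosen witness $(V,\,E,\,E)$ is independently valid and compliant. But the parenthetical identification with the algorithms' output should be softened to ``same image under $\alpha$'' rather than ``same $E_\sigma$''.
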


\noindent
Our algorithms return this result, \ie $\alpha \ (\textnormal{\texttt{generate G M E}}) = (V,\ E \cup \overleftarrow{E})$. 
If there are information flow security invariants or access control invariants with side effects, our algorithms also handle these problems.

All results can be computed interactively on today's standard hardware. 
The graph preparation, which needs to be done only once, takes several seconds. 
Our \texttt{generate} algorithms take a few seconds. 
This shows the practical low computational complexity for a large real-world study.

\section{Related Work}
\label{sec:related}
In the research field of firewalls, several successful approaches to ease management~\cite{bartal1999firmato} and uncovering errors~\cite{fireman2006} exist. 
In \cite{cspfirewall}, the authors propose that a network security policy should exist in an informal language. 
A translation from the informal language to a formalized policy with an information content comparable to the directed policy in this work must be present. 
The same model for firewall rules and security policy is used. 
The authors model services, \ie ports, explicitly but ignore the direction of packets in their firewall model and are hence vulnerable to several attacks, such as spoofing~\cite{wool2004use}. 
Constraint Satisfaction Problem (CSP) solving techniques are used to test compliance of the security policy and the firewall rule set. 
Using Logic Programming with Priorities (LPP), Bandara \etal\cite{bandara2009using} build a framework to detect firewall anomalies and generate anomaly-free firewall configurations from a security policy. 
The authors explicitly point out the need for solving the stateful firewall problem.

Brucker~\etal\cite{brucker2008modelfwisabelle} provide a formalization of simple firewall policies in Isabelle/HOL and simplification rules for them. 
With this, they introduce HOL-TestGen/FW, a tool to generate test cases for conformance testing of a firewall rule set, \ie that the firewall under test implements its rule set correctly.  
In~\cite{brucker2013modelfwisabelle}, the authors augment their work with user-friendly high-level policies. 
This also allows the verification of a network specification with regard to these high-level policies. 

Guttman~\etal~\cite{guttman05rigorous,Guttman:1997:FilteringPostures} focus on distributed network security mechanisms, such as firewalls, filtering routers, and IPsec gateways. 
Security goals centered on the path of a packet through the network can be verified against the distributed network security mechanisms configuration. 

Using formal methods, network vulnerability analysis reasons about complete networks, including the services and client software running in the network. 
Using model checking~\cite{modelchecking2000} or logic programming~\cite{ou2005mulval}, network vulnerabilities can be discovered or the absence of vulnerabilities can be shown. 
One potential drawback of these methods is that the set of vulnerabilities must be known for the analysis, which can be an advantage for postmortem network intrusion analysis, but is also a downside when trying to estimate a network's future vulnerability.

Kazemian \etal\cite{kazemian2012HSA} present a method for the packet forwarding plane to identify problems such as reachability issues, forwarding loops, and traffic leakage. 
Considering the individual packet bits, the header space is represented by a $\langle\mathit{maximum\ packet\ header\ size\ in\ bits}\rangle$-dimensional space. 
An efficient algebra on the header space is provided which enables checking of the named use cases. 


\section{Conclusion}
\label{sec:conclusion}

Stateful firewall rules are commonly used to enforce network security policies. 
Due to these state-based rules, flows opposite to the security policy rules might be allowed. 
On the one hand, we argued that under presence of side effects or information flow invariants, a naive stateful implementation might break security invariants. 
On the other hand, declaring certain firewall rules to be stateless might impair the functionality of the network. 
This problem domain has often been overlooked in previous work. 

Verifying that a stateful firewall rule set is compliant with the security policy and its invariants is computationally expensive. 
In this work, we discovered a linear-time method and contribute algorithms for verifying and also for computing stateful rule sets. 
We demonstrated that these algorithms are fast enough for reasonably large networks, while provably maintaining soundness and completeness. 

Since the complete formalization, including algorithms and proofs, has been carried out in Isabelle/\allowbreak{}HOL, there is high confidence in their correctness. 
For the future, we see the need for verified translation methods from network device configurations to formally accessible objects, such as graphs.  

\section*{Acknowledgements \& Availability}
\label{sec:availability}
We thank our network administrator Andreas Korsten for his valuable input, his time and commitment. 
We appreciate Heiko Niedermayer's and Jasmin Blanchette's feedback.

This work has been supported by the German Federal Ministry of Education and Research (BMBF), EUREKA project SASER, grant 16BP12304, and by the European Commission, FP7 project EINS, grant 288021.

The Isabelle/HOL theory files can be obtained at \url{https://github.com/diekmann/topoS}. 
The complete raw data set of the firewall rules and a dump of our LDAP database, used to automatically construct some firewall rules, can be obtained at \url{https://github.com/diekmann/net-network}. 

	\bibliographystyle{eptcs}
	\bibliography{../literature}


\parindent 0pt
\begin{minipage}{\textwidth}
\begingroup
\def\enoteformat{\rightskip=0pt \leftskip=0pt \parindent=0pt \leavevmode{\makeenmark}\hspace*{3pt}}%
\def\enotesize{\footnotesize}
\begin{footnotesize} 
\theendnotes
\end{footnotesize}
\endgroup
\end{minipage}

\end{document}